\documentclass[11pt,a4paper]{article}

\usepackage[T1]{fontenc}
\usepackage[utf8]{inputenc}
\usepackage[a4paper,margin=1in]{geometry}
\usepackage{lmodern}

\usepackage{graphicx}
\usepackage{caption}
\usepackage{subcaption}
\usepackage{amsmath,amssymb,amsfonts}
\usepackage{amsthm}
\usepackage{xcolor}
\usepackage{booktabs}
\usepackage{multirow}
\usepackage{braket}
\usepackage{tikz}
\usetikzlibrary{arrows.meta,decorations.pathmorphing}
\usepackage{enumitem}
\usepackage{microtype}
\usepackage{acronym}
\usepackage[hidelinks]{hyperref}
\usepackage[nameinlink,noabbrev]{cleveref}

\newcommand{\negl}[1]{\operatorname{negl}(#1)}
\DeclareMathOperator{\poly}{poly}
\newcommand{\verifier}{\mathcal{V}}
\newcommand{\prover}{\mathcal{P}}
\newcommand{\adversary}{\mathcal{A}}
\newcommand{\KeyGen}{\mathsf{KeyGen}}
\newcommand{\sample}{\overset{\$}{\leftarrow}}
\newcommand{\Ii}{I_{i}}

\theoremstyle{plain}
\newtheorem{theorem}{Theorem}

\newtheorem{lemma}[theorem]{Lemma}

\theoremstyle{definition}

\newtheorem{definition}{Definition}

\theoremstyle{remark}

\begin{document}

\title{Security Framework for Quantum Distance-Bounding}

\author{
Kevin Bogner$^{1}$, Aysajan Abidin$^{1}$, Dave Singelee$^{2}$, Bart Preneel$^{1}$\\[0.5em]
\small $^{1}$COSIC, KU Leuven, Leuven, Belgium\\
\small $^{2}$DistriNet, KU Leuven, Leuven, Belgium\\
\small \texttt{kevin.bogner@kuleuven.be, aysajan.abidin@kuleuven.be,}\\
\small \texttt{dave.singelee@kuleuven.be, bart.preneel@kuleuven.be}
}

\date{}

\maketitle

\begin{abstract}
Distance-bounding (DB) protocols let a verifier upper-bound a prover's physical
distance by timing rapid challenge-response exchanges. Quantum communication
promises simpler DB protocols with stronger security guarantees, yet existing
quantum distance-bounding (QDB) proposals are analysed in ad-hoc models and, to
the best of our knowledge, lack a common game-based treatment of standard fraud
attacks. We contribute (i) a reusable security framework for QDB that fixes
system and timing assumptions, specifies a quantum-capable adversary model,
formalises distance-, mafia-, and terrorist-fraud experiments, and includes a
simple i.i.d.\ depolarizing noise model; and (ii) an application of this
framework to a published QDB protocol. For this protocol we characterise the
honest per-round acceptance probability under noise and lift it to the
multi-round setting, yielding explicit completeness guarantees as a function of
the number of fast rounds, the acceptance threshold, and the noise parameter.
For active adversaries we bound the per-round success probability of
distance-fraud attacks and analyse the best known mafia-fraud strategy,
deriving corresponding multi-round soundness bounds. We also show that the
protocol is inherently insecure against terrorist-fraud in our model. The
framework cleanly separates protocol-independent definitions from
protocol-specific analysis and can be used to evaluate existing and future QDB
protocols on a common basis.
\end{abstract}

\noindent\textbf{Keywords:}
Quantum distance-bounding, Distance-bounding, Provable security, Quantum
cryptography, Quantum communication

\section{Introduction}
Verifying that a remote device is within a claimed physical distance is a fundamental primitive for access control, secure ranging, and secure localisation. \emph{Distance-bounding} (DB) protocols achieve this by measuring the round-trip time of challenge-response exchanges, allowing the verifier to upper-bound its physical distance to the prover~\cite{brands1993distance}. Typical applications include contactless authentication (e.g.\ RFID and NFC tokens), keyless entry systems, and secure ranging and localisation in wireless networks~\cite{hancke2005rfid}. Classical DB designs resist both \emph{distance-fraud} (DF), where a far-away dishonest prover cheats on its location, and \emph{mafia-fraud} (MF), a relay attack by two external adversaries. The strongest threat is \emph{terrorist-fraud} (TF), where a dishonest prover actively collaborates with a nearby helper. Classical countermeasures against TF remain expensive in terms of additional hardware or bandwidth, especially for low-cost devices~\cite{hancke2011design}.

At a high level, a DB protocol proceeds in three phases. In a \emph{slow} (initialisation) phase, which is not time-critical, the verifier and prover exchange nonces, derive session-specific secrets from long-term keys, and set up all parameters needed for the distance test. This is followed by a \emph{fast} phase consisting of $n$ time-critical rounds: in each round the verifier sends a fresh challenge, the prover must respond immediately, and the verifier measures the round-trip time to infer an upper bound on the distance. Finally, in the \emph{decision} phase, the verifier checks that each accepted round is both timely and logically consistent with the protocol rules, and accepts the session if the number of accepted rounds is at least a threshold~$\tau$. We will use this terminology (slow / fast / decision phase) for both classical and quantum distance-bounding protocols throughout the paper.

Quantum communication promises simpler and stronger DB because non-orthogonal quantum states cannot be perfectly copied~\cite{wootters1982single}, preventing relay adversaries from pre-computing answers. Several \emph{quantum distance-bounding} (QDB) protocols have been proposed in recent years, starting with the protocol of Abidin \emph{et al.}~\cite{abidin2016towards} and its follow-ups~\cite{abidin2019quantum,abidin2024entanglement,bogner2024entangled,bogner2025continuous}. On the classical side, DB protocols have been studied in a unified, game-based framework that fixes system assumptions, threat experiments, and soundness notions~\cite{boureanu2015practical}. In contrast, existing QDB proposals are analysed in protocol-specific models and, to the best of our knowledge, none comes with a formal security proof against standard distance-, mafia-, and terrorist-fraud attacks. We aim to close this gap for QDB in this paper.

In this work we take a step towards such a foundation. We isolate a minimal set of system and timing assumptions for QDB, specify a quantum-capable adversary model, formalise DF/MF/TF experiments in this setting, and connect per-round cheating probabilities to multi-round security. We then apply this framework to a concrete QDB protocol~\cite{abidin2019quantum} and obtain explicit completeness and soundness guarantees as functions of the number of fast rounds~$n$, the threshold~$\tau$, and a simple noise parameter.

For context, we briefly compare the analysed protocol with the classical Hancke-Kuhn style DB protocol as a baseline~\cite{hancke2005rfid}.

The protocol we analyse~\cite{abidin2019quantum} follows the same high-level structure as the classical Hancke-Kuhn DB protocol~\cite{hancke2005rfid}, consisting of a slow setup phase, $n$ time-critical fast rounds, and a final decision that accepts iff at least $\tau$ rounds satisfy the time-of-flight bound and the protocol's value check. The key difference lies in the fast-phase workload
and the resulting system assumptions. Classical Hancke-Kuhn-style DB exchanges
classical challenge and response bits and therefore requires primarily tight
latency engineering on a classical channel. In contrast, QDB replaces the
fast-phase bit exchange by
single-qubit transmissions, which introduces the cost of quantum state preparation
and measurement (and a quantum channel), while leaving the slow phase and its
symmetric-key structure essentially unchanged. Moreover, completeness and parameter
sizing must account for quantum noise on the fast-phase transmissions (captured by
our noise model in~\Cref{subsec:noise-model}). From a security perspective, both
settings rely on time-of-flight, fresh challenges, and symmetric-key assumptions
in the slow phase, but QDB additionally exploits quantum-mechanical constraints
such as no-cloning and measurement disturbance~\cite{wootters1982single}. In our
setting, the per-round secrets are derived via a \emph{quantum-secure} PRF and we
model a quantum-capable (QPT) adversary controlling both the classical and quantum
channels (see~\Cref{sec:model,sec:threats}).

For a simple quantitative comparison, we use Hancke-Kuhn~\cite{hancke2005rfid} as the closest classical baseline, since both protocols have the same slow/fast/decision structure and a lightweight symmetric-key slow phase. To keep this discussion simple at this stage, \Cref{tab:db-qdb-comparison} fixes an idealised noiseless setting with strict threshold $\tau=n$. The table reports the smallest number of rounds needed to reach a session false-accept probability at most $2^{-80}$. The full threshold/noise trade-off is analysed later in~\Cref{sec:security_analysis}.

\begin{table}[t]
\centering
\caption{Comparison with the classical Hancke-Kuhn~\protect\cite{hancke2005rfid} baseline in an idealised noiseless setting with strict threshold $\tau=n$. For Hancke-Kuhn, the standard $3/4$ DF/MF per-round baseline is summarised in~\protect\cite{hancke2011design}. $P_{\mathrm{FA}}$ denotes the false-acceptance probability. The QDB mafia-fraud row uses the best known mafia-fraud attack~\protect\cite{verschoor2022quantum}.}
\label{tab:db-qdb-comparison}
\begin{tabular}{lcc}
\toprule
Metric & Hancke-Kuhn DB~\cite{hancke2005rfid} & Abidin 2019 QDB~\cite{abidin2019quantum} \\
\midrule
Fast-phase transmission & 1 classical bit each way & 1 qubit each way \\
Per-round DF & $3/4$ & $1/2$ \\
Per-round MF & $3/4$ & $7/8$ \\
Rounds for DF, $P_{\mathrm{FA}}\le 2^{-80}$ & $193$ & $80$ \\
Rounds for MF, $P_{\mathrm{FA}}\le 2^{-80}$ & $193$ & $416$ \\
\bottomrule
\end{tabular}
\end{table}

The comparison shows that the present QDB protocol~\cite{abidin2019quantum} should not be viewed as a drop-in replacement for classical DB. While QDB improves the per-round distance-fraud bound, its best known mafia-fraud success probability is less favourable and the fast phase requires quantum hardware. Still, QDB remains worth studying. Quantum communication changes the fast-phase attack surface, and earlier QDB work already argued that early-detect/late-commit strategies do not transfer directly, since an adversary that does not know the correct basis cannot measure the challenge without disturbing the state~\cite{abidin2016towards}. At the same time, QDB introduces its own implementation assumptions and attack surface, as illustrated by photon-number-splitting attacks; indeed, the protocol of Abidin~\cite{abidin2019quantum} was proposed as an improvement over the 2016 variant in this respect, and later entanglement-based proposals continue to explore this design space~\cite{abidin2024entanglement}. We therefore do not claim that the present QDB protocol already dominates classical DB on all metrics; rather, QDB offers additional physical constraints and a different design space that may enable stronger future protocols, which motivates a common evaluation framework.

To make the guarantees of our framework interpretable and comparable across
protocols, we state upfront the system assumptions and modelling choices on which
the main results rely. These assumptions are formalised in~\Cref{sec:model,sec:threats} and are used consistently throughout the protocol
analysis in~\Cref{sec:security_analysis}.

\emph{Timing model.}
As in classical DB, our security guarantees rely on an explicit
time-of-flight constraint. In each fast round, the verifier timestamps the moment
a challenge state is emitted and accepts the round only if the corresponding
response is received within a strict deadline derived from the distance bound
(see~\Cref{subsec:timing-constraint}). Any fixed device latencies, such as state preparation and measurement delays, are assumed to be either negligible or accounted for by calibration and absorbed into the effective distance bound. This enforcement is necessary as it implies that any response generated entirely outside the verifier's proximity and received by the deadline cannot causally depend on the fresh challenge of that round. This property is a direct consequence of no-superluminal signalling (formalised as Lemma~\ref{lem:oob-general}).

\emph{Adversary model and channel control.}
We consider a quantum-capable adversary that controls both the classical and quantum channels, subject only to the constraints of quantum mechanics and relativity. The precise capabilities used in the DF/MF/TF experiments are formalised later in~\Cref{sec:threats}.

\emph{Noise assumptions.}
To capture imperfect quantum communication in \emph{honest} executions, we adopt a simple, protocol-agnostic noise model for the quantum channel (see~\Cref{subsec:noise-model}). This model is used only for analysing honest executions and parameter sizing. In contrast, our security bounds are derived against an \emph{idealised noiseless adversary} with perfect channels and measurements. This is a conservative choice, as any physical imperfections affecting the attacker can only reduce its ability to return correct, timely responses under the same decision rule.

\emph{Threat scope.}
We focus on the standard fraud threats from classical DB, namely distance fraud, mafia fraud, and terrorist fraud, and formalise them via game-based experiments (\Cref{sec:threats}). Our analysis is protocol-level and does not aim to model implementation-specific side channels or physical-layer attacks such as multi-photon effects, detector blinding, device fault attacks, or adversarially induced losses, nor do we treat alternative threat variants such as distance hijacking. These aspects are orthogonal to the core framework and are best addressed by extending the system model.

Our contributions are as follows.
\begin{itemize}
  \item We provide a reusable, game-based security framework for proving the security of QDB protocols, covering system and timing assumptions, quantum adversary models, soundness definitions, and a simple noise model.
  \item We apply this framework to an already proposed QDB protocol~\cite{abidin2019quantum} and derive explicit bounds on completeness and on DF/MF soundness in the multi-round setting.
\end{itemize}

The remainder of the paper is organised as follows. In \Cref{sec:related-work} we position our contribution within the existing literature on classical DB, QDB, and related quantum position verification (QPV). In \Cref{sec:prelims} we introduce the notation for quantum states and the probabilistic tools used later. \Cref{sec:model} formalises QDB protocols, and \Cref{sec:threats} introduces the threat model and security experiments. \Cref{sec:security_analysis} revisits the QDB protocol of Abidin~\cite{abidin2019quantum} and applies the framework to it. \Cref{sec:conclusion} concludes the paper.

\section{Related work}
\label{sec:related-work}

DB was introduced by Brands and Chaum as a cryptographic primitive for upper-bounding the physical distance to a prover through time-of-flight measurements in rapid challenge-response rounds~\cite{brands1993distance}. Since then, classical DB has produced lightweight designs for constrained devices, most notably Hancke-Kuhn-style protocols, and a large body of work on standard fraud threats, namely DF, MF, and TF, under tight latency and noisy-channel constraints~\cite{hancke2005rfid,hancke2011design}. On the formal side, classical DB has also benefited from reusable game-based security frameworks that make the system assumptions explicit and define DF/MF/TF experiments in a uniform way~\cite{boureanu2015practical}. Our QDB framework is derived from this line of work and adapts it to the quantum setting.

Within this landscape, QDB replaces the classical fast-phase bit exchange by quantum communication, typically to exploit no-cloning and measurement disturbance as additional constraints on adversaries~\cite{wootters1982single}. Early work by Abidin \emph{et al.} studied the feasibility of this approach in Hancke-Kuhn-style protocols using BB84-type states and discussed timing, hardware-delay considerations, and informal DF/MF/TF success estimates~\cite{abidin2016towards}. Abidin later refined this line with an updated QDB protocol and a discussion of implementation-driven attacks such as photon-number splitting~\cite{abidin2019quantum}. A related hybrid design by Abidin revisited an earlier qubit-based relay-attack-detection scheme and proposed an improved timed protocol with classical-bit challenges and qubit responses~\cite{abidin2020detecting}. More recent proposals extend the design space to entanglement-based, mutual, and continuous-variable variants~\cite{abidin2024entanglement,bogner2024entangled,bogner2025continuous}.

Despite this progress on protocol design, QDB security analyses remain largely protocol-specific. Existing works typically adopt bespoke system models and analyse selected attack strategies, rather than a common game-based treatment of DF/MF/TF with a standard lifting from per-round to multi-round security. Verschoor's re-analysis illustrates this gap by identifying stronger mafia-fraud strategies for published QDB protocols and by showing that TF resistance can depend subtly on the exact experiment definition~\cite{verschoor2022quantum}. To the best of our knowledge, QDB still lacks a unified treatment analogous to the classical DB frameworks~\cite{boureanu2015practical}.

A related, but distinct, line of work is \emph{quantum position verification} (QPV). QPV uses a party's geographical position as the credential and typically involves multiple verifiers under relativistic constraints, whereas QDB studies shared-key proximity authentication between a verifier and a prover. We therefore do not adopt QPV security notions directly; we only note that, as in QPV, security depends critically on making the timing assumptions and adversary resources explicit~\cite{buhrman2014position}.

\section{Preliminaries}
\label{sec:prelims}

In this section we collect the necessary preliminaries for the security analysis.
We start by introducing the notation for quantum states and measurements, which are inspired by the BB84 QKD protocol~\cite{bennett1984bb84}, and conclude with a concentration bound that we repeatedly use in later sections. The QDB protocol~\cite{abidin2019quantum} that we will apply the framework to is recalled in \Cref{subsec:qdb-protocol}.

\subsection{Notation for quantum states and measurements}\label{subsec:prelims:states-and-measurements}
Let $a\in\{0,1\}$ index the basis ($a=0$ is the computational $Z$ basis, $a=1$ is the diagonal $X$ basis).
Define the four BB84 states
\[
  \ket{0}_{0}=\ket{0},\quad \ket{1}_{0}=\ket{1},\qquad
  \ket{0}_{1}=\ket{+},\quad \ket{1}_{1}=\ket{-},
\]
where
\[
  \ket{\pm}=\tfrac{1}{\sqrt{2}}(\ket{0}\pm\ket{1}).
\]
For $r\in\{0,1\}$ we write the rank-one projectors as
\[
  \Pi_{a,r} := \ket{r}_{a}\bra{r}_{a}.
\]
We write $\bra{r}_{a}:=(\ket{r}_{a})^{\dagger}$ for the associated bra.
We also write $\Pi_{\psi}:=\ket{\psi}\bra{\psi}$ for the projector onto a pure state $\ket{\psi}$.
Measuring $\ket{\psi}$ in basis $a$ yields outcome $r$ with probability
\[
  \Pr[r]=\bigl\lVert\Pi_{a,r}\ket{\psi}\bigr\rVert^2=\bra{\psi}\Pi_{a,r}\ket{\psi}.
\]

\subsection{Concentration bound}\label{subsec:concentration}

We use a single tail bound for our QDB protocol, to bound the success probability
of the protocol in an honest or malicious setting. The key requirement is
\emph{adaptivity-robustness}: even if the adversary chooses its action in
round~$i$ \emph{after seeing everything that happened so far}, the bound still
applies as long as that round's conditional success probability is capped.
For $u,v\in(0,1)$, let
\[
  D(u\Vert v):=u\ln\frac{u}{v} + (1-u)\ln\frac{1-u}{1-v}
\]
denote the binary relative entropy between
$\mathrm{Bernoulli}(u)$ and $\mathrm{Bernoulli}(v)$. In our applications, $u$ will be the
threshold ratio $\tau/n$ (number of required accepted rounds $\tau$ divided by the total number of rounds $n$),
and $v$ will be a bound $p$ on the per-round success probability.

This is a Chernoff-type bound for adaptively generated Bernoulli variables: the
indicators $I_i$ may be chosen based on the full history $\mathcal{F}_{i-1}$,
and we only require the conditional success probability
$\mathbb{E}[I_i \mid \mathcal{F}_{i-1}]$ to be bounded by $p$. In the i.i.d.
case it reduces to the standard Chernoff bound.

\begin{lemma}[Chernoff bound under adaptivity~\cite{mitzenmacher2017probability}]\label{lem:bern-kl}
  Fix any process of fast rounds and let $(\mathcal{F}_{i})_{i=0}^{n}$ be the
  filtration where $\mathcal{F}_{i}$ denotes the information available
  \emph{after round $i$ has completed} (equivalently: just before round $i+1$
  begins), with $\mathcal{F}_0$ the information at the start of the fast phase.
  Concretely, $\mathcal{F}_{i}$ contains
  \begin{itemize}
    \item the \emph{public transcript} up to round $i$ (all classical messages,
          timing/acceptance flags of rounds $1,\dots,i$, and any randomness already revealed); and
    \item the adversary's \emph{internal state} (its private classical registers
          and any quantum registers it keeps between rounds, including entanglement with systems
          not yet measured).
  \end{itemize}
  Let $I_i\in\{0,1\}$ be the indicator that round $i$ is accepted, and let
  $S=\sum_{i=1}^n I_i$.
  If the single-round conditional success is capped as
  \[
    \mathbb{E}[\,I_i \mid \mathcal{F}_{i-1}\,] \le p
    \qquad\text{for every } i\in[n],
  \]
  then for any $\tau\in(np,n]$,
  \[
    \Pr[S \ge \tau]
      \;\le\; \exp\Bigl(-\,n\,D\bigl(\tfrac{\tau}{n}\,\Vert\,p\bigr)\Bigr)
      \quad \text{(upper tail)}.
  \]
  Similarly, if $\mathbb{E}[\,I_i \mid \mathcal{F}_{i-1}\,] \ge p$ for all $i$, then
  for any $\tau\in[0,np)$,
  \[
    \Pr[S \le \tau]
      \;\le\; \exp\Bigl(-\,n\,D\bigl(\tfrac{\tau}{n}\,\Vert\,p\bigr)\Bigr)
      \quad \text{(lower tail)}.
  \]
  \end{lemma}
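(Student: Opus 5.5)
The plan is to use the exponential-moment (Chernoff) method with a supermartingale replacing independence. Fix $\lambda>0$ for the upper tail and consider $M_k := \exp\bigl(\lambda S_k\bigr)\,(1-p+pe^{\lambda})^{-k}$, where $S_k=\sum_{i\le k} I_i$. Since $I_i\in\{0,1\}$ is $\mathcal{F}_i$-measurable, we can write $e^{\lambda I_i}=1+(e^{\lambda}-1)I_i$. This gives
\[
\mathbb{E}\bigl[e^{\lambda I_i}\mid\mathcal{F}_{i-1}\bigr]
 = 1+(e^{\lambda}-1)\,\mathbb{E}[I_i\mid\mathcal{F}_{i-1}]
 \le 1-p+pe^{\lambda},
\]
using $e^{\lambda}-1>0$ and the cap $\mathbb{E}[I_i\mid\mathcal{F}_{i-1}]\le p$. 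Pulling out the $\mathcal{F}_{i-1}$-measurable factor $e^{\lambda S_{i-1}}$ shows $\mathbb{E}[M_i\mid\mathcal{F}_{i-1}]\le M_{i-1}$. Iterating with the tower property yields $\mathbb{E}[e^{\lambda S}]\le(1-p+pe^{\lambda})^n$.

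\textbf{Measurability.} One point needs care, because the filtration contains quantum registers. I will interpret $\mathcal{F}_i$ as the $\sigma$-algebra generated by all classical outcomes up to round $i$, including the adversary's measurement results and the verifier's acceptance flags. The hypothesis is read as a bound on the conditional probability given that classical history, obtained from the Born rule applied to the joint post-measurement state. The quantum registers enter only through this conditional distribution. Once the joint process is described by a classical probability space of outcomes, the argument above is purely classical. The hypothesis is formulated precisely so that the bound holds for every history, so adaptivity costs nothing.

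\textbf{Upper tail.} Apply Markov: $\Pr[S\ge\tau]\le e^{-\lambda\tau}(1-p+pe^{\lambda})^n$. Set $u=\tau/n$ and optimise over $\lambda$. The minimiser is $e^{\lambda}=\frac{u(1-p)}{p(1-u)}$, which is $>1$ exactly when $u>p$; this is why the hypothesis requires $\tau>np$. Substituting gives the exponent $-nD(u\Vert p)$ by the standard computation. For the boundary case $u=1$ the formula degenerates. There I take the limit $\lambda\to\infty$, or note directly that $\Pr[S=n]\le p^n=e^{-nD(1\Vert p)}$ by the same conditioning argument, using the convention $0\ln 0=0$.

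\textbf{Lower tail.} The argument is symmetric. Apply the upper-tail argument to $J_i=1-I_i$, which satisfies $\mathbb{E}[J_i\mid\mathcal{F}_{i-1}]\le 1-p$. The event $S\le\tau$ becomes $\sum J_i\ge n-\tau$ with $(n-\tau)/n>1-p$. Then use $D(1-u\Vert 1-p)=D(u\Vert p)$, and handle $\tau=0$ as the boundary case in the same way as $u=1$ above.

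The main obstacle is not the calculation but justifying the supermartingale step in the quantum setting. One must argue that conditioning on $\mathcal{F}_{i-1}$ is well defined and that the tower property applies. I would handle this by deferring all quantum operations to the classical outcome record and treating the whole experiment as a classical stochastic process over that record.
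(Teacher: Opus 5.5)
Your proof is correct. The paper does not prove this lemma; it only cites Mitzenmacher--Upfal, whose Chernoff bounds are stated for independent trials, so the citation does not literally cover the adaptive version with the relative-entropy exponent. Your argument supplies the missing step. You use the exponential-moment method, but replace the product of independent moment generating functions by the supermartingale $M_k=e^{\lambda S_k}(1-p+pe^{\lambda})^{-k}$, which needs only the conditional cap $\mathbb{E}[I_i\mid\mathcal{F}_{i-1}]\le p$. The remaining steps are also right: the optimiser $e^{\lambda}=u(1-p)/(p(1-u))$, the boundary case $\Pr[S=n]\le p^n$ via iterated conditioning, and the lower tail via $J_i=1-I_i$ with $D(1-u\Vert 1-p)=D(u\Vert p)$.

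Compared with the bare citation, your route buys two things. First, it delivers the full KL exponent under adaptivity. A generic martingale inequality such as Azuma--Hoeffding would give only $\exp(-2n(u-p)^2)$, which is strictly weaker by Pinsker. For $p=7/8$ and $\tau=n$, certifying $2^{-80}$ would then need about $1775$ rounds instead of $416$. The paper's strict-threshold table entries are exactly your boundary case $p^n\le 2^{-80}$.

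Second, your measurability remark makes the statement precise. The paper's ``filtration'' containing quantum registers is not a $\sigma$-algebra. Replacing it by the $\sigma$-algebra of classical outcomes, with conditional laws given by sequential Born-rule statistics, is harmless. Any cap that holds for every history and every adversary state implies the cap given the coarser classical history, by the tower property.
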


\section{Model for Quantum Distance-Bounding}
\label{sec:model}
Our model adapts the classical DB framework of
Boureanu et al.~\cite{boureanu2015practical} to the
quantum setting; the definitions and security experiments introduced
here and in \Cref{sec:threats} follow their structure closely.

\subsection{Complexity}
Let \(\lambda \in \mathbb{N}\) denote the security parameter, which controls the security-efficiency tradeoff: as \(\lambda\) grows, costs may increase while any adversary's success probability drops to a negligible function in~\(\lambda\).
Throughout this paper,
\begin{itemize}
    \item a \emph{probabilistic polynomial-time} (PPT) algorithm is a classical probabilistic Turing machine whose running time is bounded by a polynomial in the security parameter~\(\lambda\) (and the length of its explicit inputs);
    \item a \emph{quantum polynomial-time} (QPT) algorithm is a quantum algorithm whose running time is bounded by a polynomial in~\(\lambda\) (and the input length);
    \item \(\poly(\cdot)\) denotes an unspecified polynomial; and
    \item a function \(\negl{\cdot}\) is \emph{negligible} if for every polynomial \(p(\cdot)\),
    \(\negl{\lambda} \le 1/p(\lambda)\) for all sufficiently large \(\lambda\).
\end{itemize}

All honest-party algorithms are efficient; classical algorithms are PPT and
quantum algorithms are QPT.

Let $n\in\mathbb{N}$ be the number of time-critical challenge-response rounds
in the \emph{fast phase} of one protocol execution; we treat $n$ as a tunable parameter.

\subsection{Participants}
A verifier~\(\verifier\) and a prover~\(\prover\) interact over a channel fully controlled by a quantum-capable adversary~\(\adversary\).

\begin{definition}[Quantum Distance-Bounding protocol \cite{boureanu2015practical}]
    A QDB protocol is a tuple
    \[
        \mathrm{QDB} \;=\; (\KeyGen, \prover, \verifier, B),
    \]
    where
    \begin{enumerate}
        \item $\KeyGen(1^{\lambda}) \rightarrow x$ is an efficient (classical or quantum) algorithm that outputs a classical key $x \in \{0,1\}^{\lambda}$;
        \item \(\prover(x)\) is an interactive QPT algorithm;
        \item $\verifier$ is an interactive QPT algorithm. We write
        $\verifier(x,r) \rightarrow \mathrm{Out}_{\verifier} \in \{0,1\}$
        for its execution on key $x$ with internal randomness $r$, where
        $\mathrm{Out}_{\verifier}$ is the verifier's final decision bit; and
        \item \(B\) is the maximum allowed distance between $\verifier$ and $\prover$.
    \end{enumerate}
\end{definition}

\subsection{Timing constraint}\label{subsec:timing-constraint}
Let \(c\) be the speed of light (not to be confused with the challenge bits \(c_i\)).
During the fast phase, \(\verifier\) rejects whenever the measured round-trip time exceeds
\[
    \Delta t_{\max} \;:=\; \frac{2B}{c}.
\]
We take double the distance $B$ to account for the round-trip time.

\subsection{Noise model}\label{subsec:noise-model}
We model physical noise by assuming that each transmitted qubit, on both the challenge hop ($\verifier\to\prover$) and the response hop ($\prover\to\verifier$), undergoes an \emph{independent and identically distributed} (\emph{i.i.d.}) depolarizing channel~\cite{nielsen2010quantum} $\mathcal{D}_\eta$, so there are no correlations across the protocol rounds or across the hops. The depolarizing channel with parameter $\eta\in[0,1]$ acts on any single-qubit projector $\Pi$ as
\[
\mathcal{D}_{\eta}(\Pi)=(1-\eta)\Pi+\eta\,\mathbb{I}/2,
\]
i.e., with probability $1-\eta$ the qubit is unchanged, and with probability $\eta$ it is replaced by the \emph{maximally mixed} state $\mathbb{I}/2$ (complete randomness), which yields a \emph{fair coin flip} under measurement in \emph{any} basis.

When the parties measure in the intended bases, a single noisy hop yields the correct bit with probability
\[
1-\eta/2=(1-\eta)\cdot 1+\eta/2.
\]
Across a full fast round (two hops), the final bit equals the original iff either both hops preserve the bit or both flip it, so the honest per-round acceptance is
\[
p(\eta)=(1-\eta/2)^2+(\eta/2)^2=1-\eta+\eta^2/2.
\]

\subsection{Completeness}\label{sec:req}

Completeness is the \emph{liveness} condition: if the prover is honest and physically within the distance bound, then the protocol should accept except with negligible probability. We write $(A \leftrightarrow B)$ for the execution of the interactive protocol between parties $A$ and $B$. We define completeness as follows:

\begin{definition}[Completeness \cite{boureanu2015practical}] \label{def:completeness}
    If an honest prover~\(\prover\) is located at distance \(d \le B\) from the verifier~\(\verifier\), then
    \[
        \Pr\Bigl[
            (\verifier(x,r)\;\leftrightarrow\;\prover(x))
            \text{ accepts}
            \Bigr]
        \;\ge\; 1 - \negl{\lambda}.
    \]
\end{definition}

\section{Threat model}\label{sec:threats}

Throughout this paper, the adversary~\(\adversary\) is QPT and controls both classical and quantum channels: it may intercept, delay, inject, drop, relay, store systems, and apply arbitrary \emph{completely positive, trace-preserving (CPTP)} maps to any quantum data it possesses~\cite{nielsen2010quantum}. A CPTP map is any physically allowed quantum operation on a state. All actions must respect quantum mechanics and relativity (e.g., no cloning, measurement disturbance, no superluminal signalling).

\subsection{Security experiments}\label{sec:threats:experiments}
We follow a standard game-based template. For each threat, we define an \emph{experiment} that specifies the parties, the adversary's capabilities, and the acceptance condition; the experiment outputs $1$ iff the verifier accepts. The adversary's \emph{advantage} is the probability that the experiment outputs $1$. A protocol is \emph{secure} against that threat if every QPT adversary achieves negligible advantage as a function of the security parameter~$\lambda$.

In the following, we define the three main threat classes: distance-fraud (DF), mafia-fraud (MF), and terrorist-fraud (TF).

\subsubsection{Distance-fraud} \label{df}
\begin{definition}[Distance-fraud experiment adapted from \cite{boureanu2015practical}]
    \label{def:df-experiment}
    \noindent
    \begin{enumerate}
        \item \textbf{Setup.}
              Verifier~$\verifier$ and dishonest prover~$\prover^\star$
              share a long-term secret key~$x$. ~$\prover^\star$ is located at distance $d>B$ from~$\verifier$.
        \item \textbf{Challenge session.}
              \(\verifier\) and \(\prover^{\star}\) interact in a complete execution of the QDB protocol, consisting of \(n\) fast rounds obeying the distance bound limit as in an honest run.
        \item \textbf{Output.} $\verifier$ outputs \(\mathrm{Out}_{\verifier}\in\{0,1\}\).
    \end{enumerate}
\end{definition}

\begin{definition}[Distance-fraud advantage adapted from \cite{boureanu2015practical}]
\label{def:df-adv}
    For every QPT dishonest prover~\(\prover^{\star}\) located at distance \(d>B\) from
    verifier~\(\verifier\), define
    \[
        \mathsf{Adv}^{\mathrm{DF}}_{\mathrm{QDB},\prover^\star}(\lambda)
        :=\;
        \Pr\left[
            \begin{array}{l}
                x \leftarrow \KeyGen(1^\lambda); \\
                \text{Out}_{\verifier} \leftarrow
                (\verifier(x,r)\;\leftrightarrow\;\prover^\star(x))
            \end{array}
            :\;
            \text{Out}_{\verifier}=1
            \right].
    \]
\end{definition}

\begin{definition}[Distance-fraud security adapted from \cite{boureanu2015practical}]
\label{def:df}
    A QDB protocol is \(\varepsilon_{\mathrm{DF}}(\lambda)\)-distance-fraud secure
    if, for every QPT adversary \(\prover^\star\),
    \[
        \mathsf{Adv}^{\mathrm{DF}}_{\mathrm{QDB},\prover^\star}(\lambda)
        \;\le\;
        \varepsilon_{\mathrm{DF}}(\lambda),
    \]
    where \(\varepsilon_{\mathrm{DF}}\) is a negligible function.
\end{definition}

\subsubsection{Mafia-fraud}
\label{mf}
\begin{definition}[Mafia-fraud experiment adapted from \cite{boureanu2015practical}]
  \label{def:mf-experiment}
    \noindent
    \begin{enumerate}
        \item \textbf{Setup.}
              Verifier~\(\verifier\) and honest prover~\(\prover\) share a long-term
              secret key~\(x\).
              Two QPT adversaries,
              \(\adversary_1\) (co-located with~\(\verifier\)) and
              \(\adversary_2\) (co-located with~\(\prover\)),
              share an authenticated classical and quantum channel.
        \item \textbf{Learning phase.}
              The adversaries may initiate and control any polynomial number of auxiliary executions of the QDB protocol between $\verifier$ and $\prover$:
              in each such execution, every classical and quantum message between the honest parties passes through $(\adversary_1,\adversary_2)$, who may relay, delay, drop, modify, or inject messages arbitrarily (subject only to the timing constraints).
              At the end of this phase, $(\adversary_1,\adversary_2)$ retain the entire classical transcript and their joint quantum state.
        \item \textbf{Challenge session.}
              The adversaries $(\adversary_1,\adversary_2)$ interact with $\verifier$ in a full execution of the QDB protocol.
              Simultaneously, they may interact with the honest $\prover$ (who uses the correct long-term key $x$ and is at distance $d > B$).
              $\verifier$ enforces the distance bound $B$.
        \item \textbf{Output.} $\verifier$ outputs \(\mathrm{Out}_{\verifier}\in\{0,1\}\).
    \end{enumerate}
\end{definition}

\begin{definition}[Mafia-fraud advantage adapted from \cite{boureanu2015practical}]
\label{def:mf-adv}
    Let \((\adversary_1,\adversary_2)\) be any QPT pair with
    \(\adversary_1\) close to verifier~\(\verifier\) and \(\adversary_2\)
    close to prover~\(\prover\) at distance \(d>B\) from \(\verifier\).
    Set
    \[
        \mathsf{Adv}^{\mathrm{MF}}_{\mathrm{QDB},\adversary_1,\adversary_2}(\lambda)
        :=\;
        \Pr\left[
            \begin{array}{l}
                x \leftarrow \KeyGen(1^\lambda); \\
                (\adversary_1,\adversary_2)\,
                \leftrightarrow\,\prover(x); \\
                \text{Out}_{\verifier} \leftarrow
                \bigl(\,\verifier(x,r)\;\leftrightarrow\;
                \adversary_1 \parallel \adversary_2\,\bigr)
            \end{array}
            :\;
            \text{Out}_{\verifier}=1
            \right].
    \]
\end{definition}

\begin{definition}[Mafia-fraud security adapted from \cite{boureanu2015practical}]
  \label{def:mf}
    A QDB protocol is \(\varepsilon_{\mathrm{MF}}(\lambda)\)-mafia-fraud secure if, for every QPT pair
    \((\adversary_1,\adversary_2)\),
    \[
        \mathsf{Adv}^{\mathrm{MF}}_{\mathrm{QDB},\adversary_1,\adversary_2}(\lambda)
        \;\le\;
        \varepsilon_{\mathrm{MF}}(\lambda),
    \]
    where \(\varepsilon_{\mathrm{MF}}\) is a negligible function.
\end{definition}

\subsubsection{Terrorist-fraud}\label{tf}

Terrorist-fraud models the strongest collusion attack in distance-bounding:
a dishonest prover \(\prover^\star\) located outside the bound \(B\) collaborates
with a nearby helper \(\mathcal{A}\) co-located with \(\verifier\). The goal is for \(\mathcal{A}\) to convince \(\verifier\) during the \emph{fast phase}, even
though \(\prover^\star\) is too far away to respond in time. In contrast to
mafia-fraud, the prover itself is malicious and may deliberately provide
information or quantum systems to the helper in order to pass the challenge
session.

Intuitively, terrorist-fraud lies between mafia-fraud and outright key disclosure. As in mafia-fraud, a nearby helper \(\mathcal{A}\) must answer the verifier during the fast phase; unlike mafia-fraud, however, the distant prover \(\prover^\star\) is itself dishonest and may deliberately assist. This assistance is nevertheless restricted: it may help only for the current session and should not equip \(\mathcal{A}\) to authenticate alone in future sessions. The core TF question is therefore whether \(\prover^\star\) can provide session-specific help that lets \(\mathcal{A}\) answer the fast rounds, yet remains non-transferable in the sense formalised below.

\begin{definition}[Terrorist-fraud experiment adapted from \cite{boureanu2015practical}]
\label{def:tf-experiment}
\noindent
\begin{enumerate}
  \item \textbf{Setup.}
        Verifier~$\verifier$ and dishonest prover~$\prover^\star$ share a long-term secret key~$x$.
        ~$\prover^\star$ is located at distance $d>B$ from~$\verifier$.
        A helper $\mathcal{A}$ is co-located with~$\verifier$.
        The pair $(\prover^\star,\mathcal{A})$ share an authenticated classical and quantum channel.
  \item \textbf{Learning phase.}
        $(\prover^\star,\mathcal{A})$ may engage in any polynomial number of auxiliary executions of the QDB protocol with $\verifier$.
        In each such execution, $\mathcal{A}$ is co-located with $\verifier$ and may relay, modify, or inject messages, while $\prover^\star$ participates at its true distant location using key~$x$.
        The pair may also exchange arbitrary classical and quantum messages over their channel, and all information gathered in this phase is available later.
  \item \textbf{Challenge session.}
        $\mathcal{A}$ impersonates $\prover$ in a full execution of the QDB protocol with $\verifier$,
        subject to the same distance bound~$B$ as in an honest run.
  \item \textbf{Output.} $\verifier$ outputs $\mathrm{Out}_{\verifier}\in\{0,1\}$.
\end{enumerate}
\end{definition}

Without further restriction, \Cref{def:tf-experiment} would be trivial: \(\prover^\star\)
could simply reveal reusable secret information (e.g., the long-term key \(x\)),
after which \(\mathcal{A}\) could impersonate \(\prover\) not only in the current
challenge session but also in future sessions. The distinguishing feature of
terrorist-fraud is therefore that the prover's help is allowed to be useful for
the current session, yet should not give the helper a reusable ability to
authenticate on its own later. We formalise this \emph{non-transferability}
requirement as follows:

\begin{definition}[Non-transferable assistance]\label{def:non-transferable-assistance}
  Let $(\prover^\star,\mathcal{A})$ be a QPT pair as in \Cref{def:tf-experiment}, and fix a long-term key~$x$.
  Run the terrorist-fraud experiment once and consider the final (classical and quantum) state of~$\mathcal{A}$ at the end of this execution.
  Now let $\mathcal{A}$, starting from this state and with no further interaction with $\prover^\star$, engage alone in a second execution of the QDB protocol with $\verifier$, where the long-term key is still $x$ and all nonces and verifier randomness are freshly sampled.
  We say that the assistance of $(\prover^\star,\mathcal{A})$ is \emph{non-transferable} if the probability that $\verifier$ accepts in this second execution is negligible in the security parameter~$\lambda$.
\end{definition}

\begin{definition}[Terrorist-fraud advantage adapted from \cite{boureanu2015practical}]
\label{def:tf-adv}
Let $(\prover^\star,\mathcal{A})$ be any QPT pair as in
\Cref{def:tf-experiment}, whose assistance is non-transferable in the
sense of \Cref{def:non-transferable-assistance}. The terrorist-fraud
advantage of $(\prover^\star,\mathcal{A})$ against QDB is
\[
  \mathsf{Adv}^{\mathrm{TF}}_{\mathrm{QDB},\prover^\star,\mathcal{A}}(\lambda)
  :=\;
  \Pr\left[
    \begin{array}{l}
	      x \leftarrow \KeyGen(1^\lambda); \\
	      \mathrm{Out}_{\verifier} \leftarrow
	      \bigl(\,\verifier(x,r)\;\leftrightarrow\;(\prover^\star(x),\mathcal{A})\,\bigr)
	    \end{array}
	    :\; \mathrm{Out}_{\verifier}=1
	  \right],
\]
where the interaction $(\verifier \leftrightarrow (\prover^\star,\mathcal{A}))$
is the terrorist-fraud experiment of \Cref{def:tf-experiment}.
\end{definition}

\begin{definition}[Terrorist-fraud security adapted from \cite{boureanu2015practical}]
\label{def:tf}
A QDB protocol is $\varepsilon_{\mathrm{TF}}(\lambda)$-terrorist-fraud secure if, for every QPT pair $(\prover^\star,\mathcal{A})$ whose assistance is non-transferable as in \Cref{def:non-transferable-assistance},
\[
\mathsf{Adv}^{\mathrm{TF}}_{\mathrm{QDB},\prover^\star,\mathcal{A}}(\lambda)
\;\le\;
\varepsilon_{\mathrm{TF}}(\lambda),
\]
where $\varepsilon_{\mathrm{TF}}$ is a negligible function.
\end{definition}

\subsubsection{Soundness} \label{subsec:soundness}
\begin{definition}[Soundness w.r.t.\ a set of threats adapted from \cite{boureanu2015practical}]
\label{def:T-soundness}
Let $\mathcal T\subseteq\{\mathrm{DF},\mathrm{MF},\mathrm{TF}\}$.
A QDB protocol is $\mathcal T$-sound if it is secure against all threats in $\mathcal T$,
i.e., for each $T\in\mathcal T$ there exists a negligible function $\varepsilon_T(\lambda)$ such that
$\mathsf{Adv}^{T}_{\mathrm{QDB}}(\lambda)\le \varepsilon_T(\lambda)$.
We call the special case $\mathcal T=\{\mathrm{DF},\mathrm{MF},\mathrm{TF}\}$ \emph{full soundness}.
\end{definition}

\subsection{Framework instantiation across QDB protocol families}

Instantiating our security framework for a QDB protocol follows the same
template throughout. One identifies (i) the messages exchanged in the slow and
fast phases; (ii) the
\emph{per-round acceptance condition}, which always decomposes into a
timing check and a protocol-specific
\emph{value or correlation check}; and (iii) a noise model that determines
the honest per-round acceptance probability. Once a per-round success bound $p$
is known for a given fraud experiment (DF/MF/TF), the same lifting argument via
Lemma~\ref{lem:bern-kl} yields the corresponding multi-round bound.

\emph{Prepare-and-measure single-qubit QDB (Abidin et al.~\cite{abidin2016towards}, Abidin~\cite{abidin2019quantum}).}
These QDB protocols fit our security framework essentially verbatim. The slow phase derives
per-round secrets, each fast round uses a single-qubit challenge and a
single-qubit response, and a round is accepted iff the response is timely and the verifier's
measurement outcome matches the verifier's fresh challenge bit. This directly
matches the DF/MF/TF experiments of~\Cref{sec:threats:experiments} and yields explicit per-round cheating probabilities. The MF bound of Abidin~\cite{abidin2019quantum} was subsequently tightened by Verschoor~\cite{verschoor2022quantum}.

\emph{Entanglement-based QDB (Abidin et al.~\cite{abidin2024entanglement}, Bogner et al.~\cite{bogner2024entangled}).}
Entanglement-based QDB protocols replace the prepare-and-measure fast phase by
operations on shared entangled systems. In our security framework, the
instantiation step is to define the per-round acceptance condition as the conjunction of the timing check and a protocol-specific \emph{correlation check}.
The DF/MF/TF experiments of \Cref{sec:threats:experiments} remain unchanged.
However, Bogner et al.~\cite{bogner2024entangled} provide only an informal security analysis and do not report per-round cheating probabilities under the standard fraud experiments.
Completeness analysis for these QDB protocols typically requires a noise model that
captures entanglement decoherence rather than single-qubit depolarisation.

\emph{Continuous-variable QDB (Bogner et al.~\cite{bogner2025continuous}).}
Continuous-variable proposals use quantum states and measurements with
continuous outcomes in the fast phase. They still instantiate the experiments of \Cref{sec:threats:experiments} by
choosing a per-round acceptance condition consisting of the timing check and a
value check, but the latter usually takes the form of a threshold test on a
continuous measurement outcome. As a result, reporting a single per-round
DF/MF/TF success probability requires specifying the exact
thresholding rule and the corresponding noise model.
In contrast to the single-qubit setting, noise in continuous-variable protocols is often dominated by loss and Gaussian noise rather than depolarisation.

\emph{Summary.}
Across all QDB protocol families, the protocol-independent parts of the security framework,
namely the threat experiments (\Cref{sec:threats:experiments}), the timing model (\Cref{subsec:timing-constraint}), and the multi-round lifting (Lemma~\ref{lem:bern-kl}), remain
unchanged; what varies across proposals is the definition of the per-round value
predicate and the noise model needed to quantify completeness and fraud
success rates.

\section{Security analysis of Abidin's QDB protocol}\label{sec:security_analysis}
In this section we apply the framework to the QDB protocol of Abidin~\cite{abidin2019quantum} (\Cref{subsec:qdb-protocol}). After recalling the protocol, we prove completeness (\Cref{def:completeness}), derive distance-fraud (\Cref{def:df}) and mafia-fraud (\Cref{def:mf}) bounds, and analyse terrorist-fraud (\Cref{def:tf}), highlighting an insecurity. We then establish two-fraud soundness (DF/MF). Each analysis first bounds single-round acceptance and is lifted to the multi-round setting via Lemma \ref{lem:bern-kl}. The noise model (\Cref{subsec:noise-model}) is used for the completeness analysis and threshold sizing; the DF/MF bounds are noise-free. This is a conservative choice, as we assume a perfect adversary (with noiseless channels and measurements), so any additional physical noise can only make DF/MF attacks less effective.

\subsection{Protocol description}\label{subsec:qdb-protocol}

\Cref{fig:QDB} depicts one session of the QDB protocol proposed by Abidin~\cite{abidin2019quantum} between a verifier~$\verifier$ and a prover~$\prover$. These two parties share a long-term key~$x$.
The session consists of an authenticated and public \emph{slow} phase, where nonces and per-round secrets are set up, followed by a \emph{fast} phase of $n$ rapid quantum challenge-response rounds whose round-trip times are measured by~$\verifier$. Finally, in the decision phase, $\verifier$ checks if the response of $\prover$ is consistent with the challenge, if the round-trip time is within the allowed bound, and accepts or rejects the session.

\subsubsection{Slow phase (setup; untimed)}
\begin{enumerate}
  \item \emph{Nonce exchange.} $\verifier$ samples $N_v \sample \{0,1\}^n$ and sends it to $\prover$; $\prover$ samples $N_p \sample \{0,1\}^n$ and sends it back to $\verifier$ (straight arrows in \Cref{fig:QDB} are classical messages). This step is performed to ensure \emph{freshness} of the protocol's execution.
  \item \emph{Deriving per-round secrets.} Using a keyed quantum-secure pseudorandom function (PRF) $f_{x}$~\cite{zhandry2021construct}, and the exchanged nonces, allows both parties to compute a pseudorandom $2n$-bit string $a \parallel b$ (computationally indistinguishable from uniform to any QPT adversary without knowledge of $x$), and divide it into two $n$-bit sequences
  \[
    a \parallel b \;=\; f_{x}(N_v, N_p), \qquad a=(a_1,\dots,a_n),\; b=(b_1,\dots,b_n).
  \]
  The bit $a_i$ selects the \emph{challenge basis} for round~$i$, while $b_i$ selects the \emph{response basis} for round~$i$.
\end{enumerate}

\subsubsection{Fast phase (distance measurement; timed)}
For each round $i=1,\dots,n$:
\begin{enumerate}
  \item \emph{Challenge preparation.} $\verifier$ samples a uniform challenge bit $c_i \sample \{0,1\}$ and prepares the challenge state in the challenge basis $a_i$ (as defined in \Cref{subsec:prelims:states-and-measurements})
  \[
      \ket{\psi_i} = \ket{c_i}_{a_i},
  \]
  and sends it to $\prover$ (wavy arrow denotes a single-qubit transmission in \Cref{fig:QDB}). $\verifier$ records the sending time as $t_i^{\text{send}}$.
  \item \emph{Immediate measurement.} Upon receipt, $\prover$ measures the challenge state $\ket{\psi_i}$ in basis $a_i$, obtaining the measurement result $c'_i$. In a noiseless setting, $c'_i=c_i$.
  \item \emph{Response preparation.} $\prover$ prepares the response state $\ket{\phi_i}$ by preparing the measurement result $c'_i$ using the response basis $b_i$ and sends (as defined in \Cref{subsec:prelims:states-and-measurements})
  \[
      \ket{\phi_i} = \ket{c'_i}_{b_i},
  \]
  back to $\verifier$ (second wavy arrow in \Cref{fig:QDB}).
  \item \emph{Response measurement.} Upon receipt, $\verifier$ measures the response state $\ket{\phi_i}$ in basis $b_i$, obtaining the measurement result $c''_i$. $\verifier$ records the receiving time as $t_i^{\text{recv}}$.
\end{enumerate}

\subsubsection{Decision phase}
After $n$ rounds, $\verifier$ checks for each round $i$ if their measurement result is consistent with the challenge $c_i = c''_i$, and checks if the round-trip time $t_i^{\text{recv}} - t_i^{\text{send}} \le \Delta t_{\max}$ is within the allowed bound. If both conditions are met, $\verifier$ accepts the round $i$. If the number of accepted rounds is at least the threshold $\tau$, $\verifier$ accepts the session. Otherwise, $\verifier$ rejects the session.

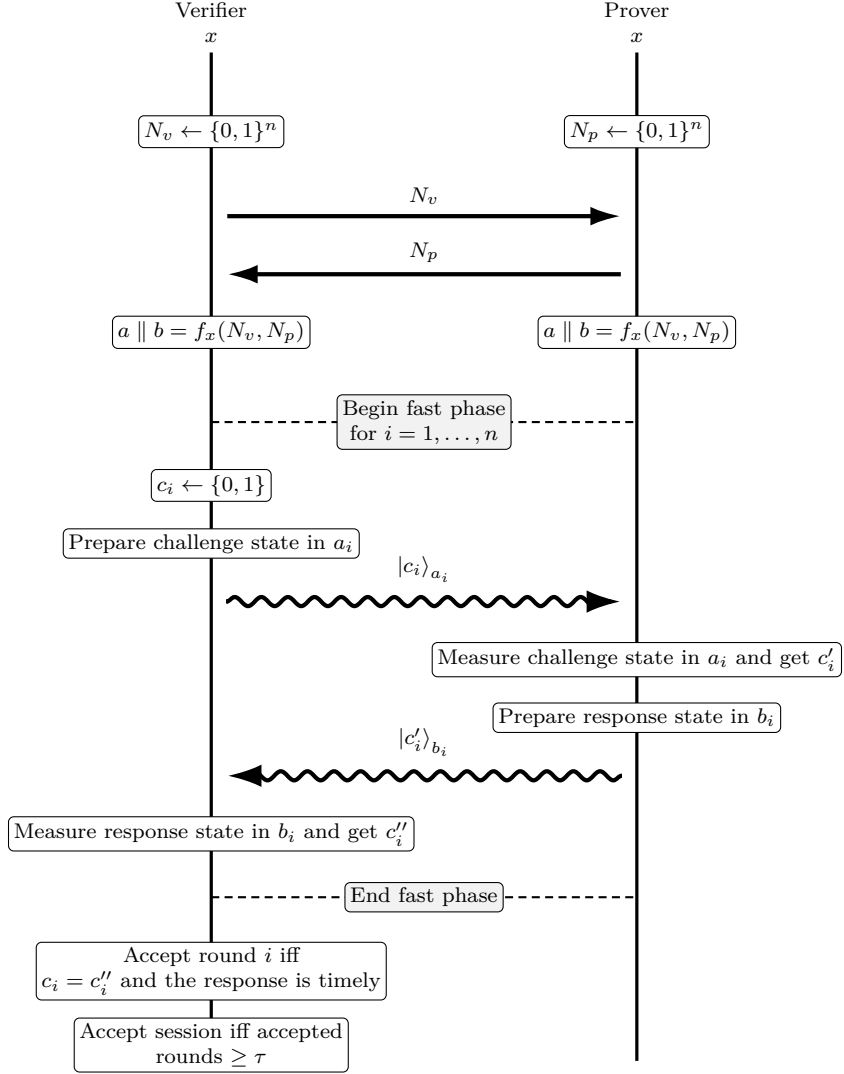
\begin{figure}[htbp]
    \centering
    \begin{tikzpicture}[
        >=Latex,
        x=4cm,
        y=0.7cm,
        font=\scriptsize,
        lifeline/.style={very thick},
        actor/.style={align=center},
        action/.style={draw, fill=white, rounded corners=2pt, align=center, inner sep=2pt},
        condition/.style={draw, fill=gray!10, rounded corners=2pt, align=center, inner sep=2pt},
        msg/.style={-{Latex[length=4mm,width=2.5mm]}, ultra thick, shorten >=6pt, shorten <=6pt},
        qmsg/.style={decorate, decoration={snake, amplitude=0.6mm, segment length=3.5mm, pre length=2mm, post length=3mm}, -{Latex[length=4.5mm,width=3mm]}, ultra thick, shorten >=6pt, shorten <=6pt}
    ]
        \coordinate (Vtop) at (0,0);
        \coordinate (Ptop) at (1.4,0);

        \node[actor, above=0pt] at (Vtop) {Verifier\\$x$};
        \node[actor, above=0pt] at (Ptop) {Prover\\$x$};

        \draw[lifeline] (Vtop) -- (0,-19.1);
        \draw[lifeline] (Ptop) -- (1.4,-19.1);

        \node[action] at (0,-1.5) {$N_v \leftarrow \{0,1\}^n$};
        \node[action] at (1.4,-1.5) {$N_p \leftarrow \{0,1\}^n$};

        \draw[msg] (0,-3.1) -- (1.4,-3.1) node[midway, above] {$N_v$};
        \draw[msg] (1.4,-4.2) -- (0,-4.2) node[midway, above] {$N_p$};

        \node[action] at (0,-5.3) {$a \parallel b = f_{x}(N_v,N_p)$};
        \node[action] at (1.4,-5.3) {$a \parallel b = f_{x}(N_v,N_p)$};

        \draw[densely dashed, thick] (0,-7.0) -- (1.4,-7.0);
        \node[condition] at (0.7,-7.0) {Begin fast phase\\for $i = 1,\dots,n$};

        \node[action] at (0,-8.2) {$c_i \leftarrow \{0,1\}$};
        \node[action] at (0,-9.3) {Prepare challenge state in $a_i$};

        \draw[qmsg] (0,-10.4) -- (1.4,-10.4) node[pos=0.5, above=3pt] {$\ket{c_i}_{a_i}$};

        \node[action] at (1.4,-11.5) {Measure challenge state in $a_i$ and get $c'_i$};
        \node[action] at (1.4,-12.6) {Prepare response state in $b_i$};

        \draw[qmsg] (1.4,-13.7) -- (0,-13.7) node[pos=0.5, above=3pt] {$\ket{c'_i}_{b_i}$};

        \node[action] at (0,-14.8) {Measure response state in $b_i$ and get $c''_i$};

        \draw[densely dashed, thick] (0,-16.0) -- (1.4,-16.0);
        \node[condition] at (0.7,-16.0) {End fast phase};

        \node[action,align=center] at (0,-17.4)
        {Accept round $i$ iff\\
         $c_i=c''_i$ and the response is timely};

        \node[action,align=center] at (0,-18.8)
        {Accept session iff accepted\\
         rounds $\ge \tau$};

    \end{tikzpicture}
    \caption{QDB Protocol~\protect\cite{abidin2019quantum}, based on the classical DB protocol of Hancke-Kuhn~\protect\cite{hancke2005rfid}.}
    \label{fig:QDB}
\end{figure}

\subsection{Completeness}\label{subsec:completeness}

We prove completeness (\Cref{def:completeness}) for the QDB protocol of \Cref{subsec:qdb-protocol}. A fast round is accepted iff the response is timely and the value check passes (i.e., $c_i=c''_i$). In our analysis we separate these two aspects to keep the proofs modular. Let $I_i\in\{0,1\}$ indicate acceptance in round $i$ and $S=\sum_{i=1}^n I_i$ the total number of accepted rounds. The verifier accepts the session iff $S\ge\tau$, where $\tau$ is the protocol's threshold.

\begin{lemma}[Single round completeness]\label{lem:honest-single}
In the ideal (noiseless) model an honest prover passes every round with probability $1$:
\[
\Pr[\Ii=1]=1
\]
for all \(i\in[n]\).
\end{lemma}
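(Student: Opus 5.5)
The plan is to split the event $\Ii=1$ into its two defining conditions, the timing check and the value check $c_i=c''_i$, and to show that each holds with probability $1$ in the ideal model. Since both events then have probability one, so does their conjunction, and the claim follows for every $i\in[n]$.

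\emph{Timing.} The honest prover is located at distance $d\le B$ and responds immediately. By the timing model, the fixed device latencies (state preparation and measurement) are either negligible or calibrated away and absorbed into the effective distance bound. The round-trip time is therefore at most
\[
\frac{2d}{c}\le\frac{2B}{c}=\Delta t_{\max},
\]
so the timing check of \Cref{subsec:timing-constraint} passes deterministically. I expect this to be the only step that leans on a modelling assumption rather than a calculation, and I will simply invoke that assumption explicitly.

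\emph{Value check.} The honest prover derives the same string $a\parallel b=f_x(N_v,N_p)$ as $\verifier$, because both use the same key and the same exchanged nonces and in the ideal model the classical slow-phase messages arrive unaltered. Hence both parties use identical bases $a_i,b_i$. The prover measures $\ket{c_i}_{a_i}$ in basis $a_i$ and obtains $c'_i=r$ with probability $\bra{c_i}_{a_i}\Pi_{a_i,r}\ket{c_i}_{a_i}=\delta_{r,c_i}$, by orthonormality of the basis $\{\ket{0}_{a},\ket{1}_{a}\}$. Thus $c'_i=c_i$ almost surely. By the same computation, measuring $\ket{c'_i}_{b_i}$ in basis $b_i$ gives $c''_i=c'_i$ with probability $1$. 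Therefore $c''_i=c_i$ with probability $1$.

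As a cross-check, the same conclusion follows from the noise model of \Cref{subsec:noise-model} with $\eta=0$, since $p(0)=1$.
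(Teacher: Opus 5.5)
Your proposal is correct and follows the paper's proof: you split acceptance into the timing check ($d\le B$ gives a round-trip time of at most $2B/c=\Delta t_{\max}$) and the value check (measuring $\ket{c_i}_{a_i}$ in basis $a_i$, then $\ket{c'_i}_{b_i}$ in basis $b_i$, each returns the encoded bit with certainty). Your explicit remarks that both parties derive the same $(a,b)$ from the unaltered nonces, and that device latencies are absorbed by calibration, are sensible additions that the paper leaves implicit.
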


\begin{proof}
  \noindent\emph{Timing.}\par
  \noindent Since \(d\le B\), the round-trip time is at most \(2B/c=\Delta t_{\max}\), which is less than the response deadline and therefore the timing bound holds.

  \smallskip
  \noindent\emph{Value.}\par
  \noindent
  Let $\ket{\psi_i}=\ket{c_i}_{a_i}$ be the challenge state and
  $\Pi_{a_i,c_i}=\ket{c_i}_{a_i}\bra{c_i}_{a_i}$ (as in \Cref{subsec:prelims:states-and-measurements}).
  \[
    \Pr[c'_i=c_i]
      = \bigl\lVert\Pi_{a_i,c_i}\ket{\psi_i}\bigr\rVert^2
      = \bigl\lVert\ket{\psi_i}\bigr\rVert^2
      = 1,
  \]
  With $\ket{\phi_i}=\ket{c'_i}_{b_i}$ and
  $\Pi_{b_i,c'_i}=\ket{c'_i}_{b_i}\bra{c'_i}_{b_i}$, the same reasoning yields
  \[
    \Pr[c''_i=c'_i]
      = \bigl\lVert\Pi_{b_i,c'_i}\ket{\phi_i}\bigr\rVert^2
      = \bigl\lVert\ket{\phi_i}\bigr\rVert^2
      = 1,
  \]
  so \(c''_i=c'_i=c_i\) with probability~\(1\), and the value test passes. Hence, \(\Ii=1\).
\end{proof}

In the following, we prove completeness under the noise model (\Cref{subsec:noise-model}) in a multi-round setting (Lemma \ref{lem:bern-kl}).

\begin{theorem}[Completeness under i.i.d.\ depolarizing noise]\label{thm:completeness-noise}
  Let each transmitted qubit pass independently through $\mathcal{D}_{\eta}$ on the forward and return channels. For any threshold $\tau\in\{0,\dots,n\}$ with $\tau<np(\eta)$, where $p(\eta)$ is as in \Cref{subsec:noise-model},
  \[
  \Pr\bigl[S<\tau\bigr] \;\le\; \exp\Bigl(-\,n \cdot D\bigl(\tfrac{\tau}{n}\,\big\Vert\, p(\eta)\bigr)\Bigr).
  \]
  In particular, for $\eta=0$ we have $\Pr[S<\tau]=0$ by Lemma~\ref{lem:honest-single}.
\end{theorem}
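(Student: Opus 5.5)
The plan is to reduce the statement to the lower-tail part of Lemma~\ref{lem:bern-kl}, with $p=p(\eta)$. This needs two facts: every round of an honest execution is accepted with probability exactly $p(\eta)$, and this holds conditionally on the history $\mathcal{F}_{i-1}$.

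\emph{Step 1 (timing).} Since the prover is honest and at distance $d\le B$, every response arrives within $\Delta t_{\max}$, exactly as in the timing part of Lemma~\ref{lem:honest-single}. The depolarizing channel acts only on the quantum state, not on arrival time. Hence $I_i=1$ if and only if $c''_i=c_i$.

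\emph{Step 2 (per-round value).} Fix any $a_i,b_i,c_i$. The challenge $\mathcal{D}_\eta(\Pi_{a_i,c_i})$ is measured in basis $a_i$, giving $\Pr[c'_i=c_i]=(1-\eta)+\eta/2=1-\eta/2$, because $\mathbb{I}/2$ yields a fair coin in any basis. Conditioned on $c'_i$, the response $\mathcal{D}_\eta(\Pi_{b_i,c'_i})$ is measured in $b_i$, giving $\Pr[c''_i=c'_i\mid c'_i]=1-\eta/2$, using fresh independent noise. Since the bits are binary, $c''_i=c_i$ exactly when both hops preserve the bit or both flip it. This gives
\[
\Pr[I_i=1\mid a_i,b_i,c_i]=(1-\eta/2)^2+(\eta/2)^2=p(\eta),
\]
which matches \Cref{subsec:noise-model}.

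\emph{Step 3 (conditioning on the past).} This is the step needing the most care: we must show $\mathbb{E}[I_i\mid\mathcal{F}_{i-1}]\ge p(\eta)$, and in fact equality holds. In an honest run there is no adversary state, so $\mathcal{F}_{i-1}$ consists of the nonces, the PRF output, and the outcomes and noise realisations of rounds $1,\dots,i-1$. By the i.i.d.\ assumption, the two channel uses in round $i$ are independent of all earlier rounds. The challenge bit $c_i$ is sampled fresh. The prover measures and re-prepares each round, so no quantum system carries over between rounds. Hence the computation of Step 2 holds conditionally on any value of $\mathcal{F}_{i-1}$ and of $(a_i,b_i)$. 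Averaging gives $\mathbb{E}[I_i\mid\mathcal{F}_{i-1}]=p(\eta)$, independently of how pseudorandom $a,b$ are. Thus the $I_i$ are in fact i.i.d.\ Bernoulli$(p(\eta))$.

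\emph{Step 4 (conclusion).} We need the event $S<\tau$, whereas the lemma's lower tail controls $S\le\tau$; since $\{S<\tau\}\subseteq\{S\le\tau\}$, it suffices to bound the latter. The hypothesis $\tau<np(\eta)$ places $\tau\in[0,np(\eta))$, so Lemma~\ref{lem:bern-kl} gives
\[
\Pr[S<\tau]\le\Pr[S\le\tau]\le\exp\bigl(-nD(\tfrac{\tau}{n}\Vert p(\eta))\bigr).
\]
One boundary case needs a remark. When $\tau=0$, $D(0\Vert p)=-\ln(1-p)$, which is finite for $p<1$, and $\Pr[S<0]=0$ trivially. The case $\eta=0$ is handled separately: there $p=1$, so $D(\cdot\Vert 1)$ falls outside the stated domain. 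Instead, Lemma~\ref{lem:honest-single} gives $S=n\ge\tau$ almost surely, so $\Pr[S<\tau]=0$.
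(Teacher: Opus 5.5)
Your proposal is correct and follows essentially the same route as the paper: for an honest prover at $d\le B$ the timing check always passes, so each round is accepted with probability exactly $p(\eta)$ conditionally on $\mathcal{F}_{i-1}$, and the lower tail of Lemma~\ref{lem:bern-kl} with $p=p(\eta)$ gives the bound. Your extra steps (rederiving $p(\eta)$, justifying the conditioning on the past, passing from $\{S<\tau\}$ to $\{S\le\tau\}$, and handling $\tau=0$ and $\eta=0$ separately) make explicit what the paper's short proof leaves implicit.
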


\begin{proof}
  Since $d\le B$, the timing bound holds deterministically for an honest prover (Lemma~\ref{lem:honest-single}); thus $I_i$ reduces to the value test in each round. By the noise model (\Cref{subsec:noise-model}), the honest per-round acceptance probability equals $p(\eta)$, hence $\mathbb{E}[I_i \mid \mathcal{F}_{i-1}] = p(\eta)$ for all $i$. Applying Lemma~\ref{lem:bern-kl} (lower tail) with $p=p(\eta)$ yields the claim. In the i.i.d.\ setting, this coincides with the binomial lower-tail bound for $S\sim\mathrm{Bin}(n,p(\eta))$.
\end{proof}

\subsection{Distance-fraud security}\label{subsec:distance-fraud-security}

Recall the distance-fraud experiment from \Cref{def:df-experiment}. The per-round secrets $a=(a_1,\ldots,a_n)$ and $b=(b_1,\ldots,b_n)$ are fixed in the slow phase and therefore \emph{known} to $\prover^{\star}$ who holds the shared secret key $x$. Throughout the fast phase the distant prover $\prover^{\star}$ is located at distance $d>B$. Since $\verifier$ accepts a round only if it receives a response within $\Delta t_{\max}=2B/c$ of sending the challenge, any response system received by the deadline from distance $d>B$ must have been emitted before a light-speed signal carrying $\ket{\psi_i}$ could reach the emitter. Thus, any such timely response is independent of the fresh challenge bit $c_i$. This is formalized in the following Lemma, which is adapted from \cite{boureanu2015practical}.

\begin{lemma}[Out-of-bound independence adapted from \cite{boureanu2015practical}]
    \label{lem:oob-general}
    Fix a fast-phase round $i$. Let $\mathsf{Near}$ denote the ball of radius $B$
    around the verifier $\verifier$ and let $\mathsf{Far}$ be its complement.
    Suppose $\verifier$ sends the challenge state $\ket{\psi_i}=\ket{c_i}_{a_i}$ at
    time $t_i^{\text{send}}$ and accepts the round only if it receives a response
    system by time $t_i^{\text{send}}+\Delta t_{\max}$.

    Then, conditioned on the entire history available at time $t_i^{\text{send}}$
    (and on any internal quantum state), any (classical or quantum) message that
    is generated entirely in $\mathsf{Far}$ and is received by $\verifier$ by the
    deadline $t_i^{\text{send}}+\Delta t_{\max}$ is statistically independent of
    the fresh challenge bit $c_i$.
\end{lemma}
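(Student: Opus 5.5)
The argument is a causality (light-cone) argument combined with the no-signalling property of quantum mechanics. I will first fix the geometry. The verifier sits at the origin and emits the challenge system at time $t_i^{\text{send}}$. Any physical influence of that system, classical or quantum, propagates at speed at most $c$. So at time $t$ it can only affect events within distance $c\,(t-t_i^{\text{send}})$ of $\verifier$.

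Next, take a message $M$ that is generated entirely in $\mathsf{Far}$, say at a point at distance $d'>B$ and at time $t_{\mathrm{gen}}$, and that reaches $\verifier$ by the deadline. Travelling back to $\verifier$ takes time at least $d'/c$, so
\[
t_{\mathrm{gen}} \le t_i^{\text{send}}+\Delta t_{\max}-d'/c < t_i^{\text{send}}+B/c .
\]
The challenge cannot reach any point at distance $d'>B$ before time $t_i^{\text{send}}+d'/c > t_i^{\text{send}}+B/c$. Hence the generation event lies outside the future light cone of the emission event. Because generation happens "entirely in $\mathsf{Far}$", every input to the generating process also lies outside that future light cone.

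I then express this operationally. Let $\rho_{\mathcal{F}}$ be the joint state of all systems outside the verifier's challenge register at time $t_i^{\text{send}}$, conditioned on the history $\mathcal{F}_{i-1}$ (including adversary-held quantum registers). The verifier's register is $\ket{c_i}_{a_i}\bra{c_i}_{a_i}$ with $c_i$ uniform and freshly sampled, so it is independent of $\rho_{\mathcal{F}}$. The far generation process is a CPTP map $\Phi$ acting only on systems that the challenge could not have influenced by time $t_{\mathrm{gen}}$. Therefore the output state $\Phi(\rho_{\mathcal{F}})$ is the same for $c_i=0$ and $c_i=1$. By no-signalling, the reduced state on the far region cannot depend on which local state the verifier prepared, even if the far systems are entangled with near systems. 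Since the output state does not depend on $c_i$, any measurement statistics of $M$, and hence its joint distribution with $c_i$ given the history, factorise. This gives statistical independence.

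I expect the main obstacle to be making "generated entirely in $\mathsf{Far}$" precise. A near-region adversary could try to forward information about $c_i$ outward. To exclude that, I would define the message's causal past as lying in $\mathsf{Far}$ after $t_i^{\text{send}}$, or in regions spacelike to the emission. I would then argue that any system entering $\mathsf{Far}$ after $t_i^{\text{send}}$ and carrying challenge influence arrives too late, using the same inequality. Fixed device latencies are absorbed into $B$ as the paper assumes.
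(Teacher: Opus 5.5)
Your proof is correct and follows the paper's own light-cone argument. The paper likewise shows that a timely message emitted at distance $\delta>B$ must leave no later than $t_i^{\text{send}}+2B/c-\delta/c<t_i^{\text{send}}+\delta/c$, i.e.\ before the challenge can reach the emitter, and concludes independence from $c_i$; your no-signalling/CPTP formalisation and your use of $t_{\mathrm{gen}}<t_i^{\text{send}}+B/c$ to exclude challenge influence forwarded into $\mathsf{Far}$ tighten that same argument rather than taking a different route.
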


\begin{proof}
Let the message be emitted from some location at distance $\delta>B$ from
$\verifier$. To arrive by $t_i^{\text{send}}+\Delta t_{\max}
=t_i^{\text{send}}+2B/c$, it must be emitted no later than
$t_i^{\text{send}}+2B/c-\delta/c < t_i^{\text{send}}+\delta/c$, i.e., strictly
before any light-speed signal carrying $\ket{\psi_i}$ could reach the emitter.
Therefore the emitted message cannot depend on $c_i$.
\end{proof}

Lemma \ref{lem:oob-general} is a direct consequence of relativity (no superluminal signalling). In the following, we prove the single-round distance-fraud bound and then aggregate across rounds using Lemma \ref{lem:bern-kl}.

\begin{lemma}[Distance-fraud single-round bound]\label{lem:df-single}
  For every round $i$ and any prior transcript, a distant prover's timing-respecting strategy satisfies
  \[
  \Pr[I_i=1] \;\le\; \tfrac12.
  \]
\end{lemma}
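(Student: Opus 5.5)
The plan is to reduce the round to a guessing game about the fresh bit $c_i$, using Lemma~\ref{lem:oob-general}. First I split on the timing check. If the response arrives after $t_i^{\text{send}}+\Delta t_{\max}$, then $I_i=0$ and nothing remains to prove. So assume the response system $R_i$ is received by the deadline. The prover $\prover^\star$ sits at distance $d>B$ and, in the DF experiment, has no helper in $\mathsf{Near}$. Hence $R_i$ is generated entirely in $\mathsf{Far}$. Lemma~\ref{lem:oob-general} then says that, conditioned on the history $\mathcal{F}_{i-1}$ and on $\prover^\star$'s internal quantum state, the state $\rho_{R_i}$ of $R_i$ is independent of $c_i$. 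Here $\mathcal{F}_{i-1}$ includes $x$, $N_v$, $N_p$, and thus $a$ and $b$. This is the same $\rho$ whether $c_i=0$ or $c_i=1$.

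Next I handle the value check. The verifier measures $R_i$ in basis $b_i$ with projectors $\Pi_{b_i,0},\Pi_{b_i,1}$ and obtains $c''_i$. Because $\rho$ does not depend on $c_i$, the outcome $c''_i$ is a random variable $\mathrm{Tr}(\Pi_{b_i,r}\rho)$ that is independent of $c_i$. The fresh bit $c_i$ is uniform and sampled independently of $\mathcal{F}_{i-1}$, so
\[
\Pr[c''_i=c_i\mid\mathcal{F}_{i-1}]=\tfrac12\bigl(\mathrm{Tr}(\Pi_{b_i,0}\rho)+\mathrm{Tr}(\Pi_{b_i,1}\rho)\bigr)=\tfrac12 .
\]
Multiplying by the probability of a timely arrival gives $\Pr[I_i=1\mid\mathcal{F}_{i-1}]\le\tfrac12$. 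Since the argument holds for any prior transcript, this conditional form is exactly what Lemma~\ref{lem:bern-kl} needs later.

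The main obstacle is justifying rigorously that the verifier's own challenge qubit cannot leak $c_i$ into the response. I also need to handle the fact that the prover may entangle earlier response systems, or systems sent ahead of time, with its memory. I would address this by treating every system reaching $\verifier$ by the deadline as part of the message of Lemma~\ref{lem:oob-general}. Its reduced state, together with everything the prover holds, is fixed before the challenge can causally influence it. The verifier's preparation of $\ket{c_i}_{a_i}$ acts on a separate register that is emitted outward and never measured by $\verifier$. Its presence therefore does not change the marginal on the response register, by no-signalling. A mixture over the timely-arrival events, including adaptive choices of when to emit, keeps the bound at $\tfrac12$ by convexity.
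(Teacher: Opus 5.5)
Your proposal is correct and follows essentially the same route as the paper. You invoke Lemma~\ref{lem:oob-general} to make any timely response independent of the uniform bit $c_i$, then use $\Pi_{b_i,0}+\Pi_{b_i,1}=\mathbb{I}$ to get acceptance probability exactly $\tfrac12$, and late responses can only lower this; your version is somewhat more careful than the paper's pure-state computation (reduced density matrix, explicit conditioning on $\mathcal{F}_{i-1}$ as needed for Lemma~\ref{lem:bern-kl}, handling of entanglement with the prover's memory), but the argument is the same.
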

  
\begin{proof}
  \noindent\emph{Timing.}\par
  \noindent Since $\prover^\star$ is at distance $d>B$, any message from $\prover^\star$ that reaches $\verifier$ by the deadline cannot depend on the fresh challenge bit $c_i$ (Lemma~\ref{lem:oob-general}). Thus, the prover's returned state $\ket{\phi_i}$ is independent of $c_i$ and fixed by the private randomness $r_i$ of $\prover^\star$. This is the only time-respecting strategy that can be applied by $\prover^\star$.
  
  \smallskip
  \noindent\emph{Value.}\par
  \noindent
  $\verifier$ measures the response in the known basis $b_i$ and accepts the value check iff the outcome equals the uniform challenge bit $c_i$, which is hidden from $\prover^\star$. Writing $\Pi_{b_i,r_i}:=\ket{r_i}_{b_i}\bra{r_i}_{b_i}$, we have
  \[
  \begin{aligned}
  \Pr[I_i=1 \mid \ket{\phi_i}]
    &= \tfrac12\sum_{r_i\in\{0,1\}} \braket{\phi_i | \Pi_{b_i,r_i} | \phi_i}\\
    &= \tfrac12\,\braket{\phi_i | \bigl(\Pi_{b_i,0}+\Pi_{b_i,1}\bigr) | \phi_i}\\
    &= \tfrac12\,\braket{\phi_i | \phi_i}\\
    &= \tfrac12,
  \end{aligned}
  \]
  since $\Pi_{b_i,0}+\Pi_{b_i,1}=\mathbb{I}$. Averaging over $\ket{\phi_i}$ gives $\Pr[I_i=1]\le 1/2$. Sending a response after the deadline can only decrease the success probability. 
\end{proof}

For the multi-round distance-fraud security, we have the following Theorem.

\begin{theorem}[Distance-fraud multi-round bound]\label{lem:df-multi}
  For any timing-respecting strategy and any threshold $\tau$ with $\tau>n/2$,
\[
  \Pr[S \ge \tau]
  \;\le\;
  \exp\Bigl(-\,n \cdot D\bigl(\tfrac{\tau}{n}\,\big\Vert\,\tfrac12\bigr)\Bigr).
\]
\end{theorem}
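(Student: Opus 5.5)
The plan is to combine the single-round bound of Lemma~\ref{lem:df-single} with the adaptive Chernoff bound of Lemma~\ref{lem:bern-kl} (upper tail) at $p=\tfrac12$. Fix an arbitrary timing-respecting distance-fraud strategy of $\prover^\star$. Take the filtration $(\mathcal{F}_i)_{i=0}^n$ from Lemma~\ref{lem:bern-kl}. It contains the public transcript up to round $i$, the nonces $N_v,N_p$ and hence $a,b$ (known to $\prover^\star$), and $\prover^\star$'s internal classical and quantum registers. The indicators $I_i$ are the per-round acceptance bits and $S=\sum_i I_i$.

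\textbf{Step 1: a conditional per-round cap.} The goal is to show $\mathbb{E}[I_i\mid\mathcal{F}_{i-1}]\le\tfrac12$ for every $i$. Lemma~\ref{lem:df-single} already holds ``for any prior transcript''. I would make this precise by rerunning its argument conditioned on $\mathcal{F}_{i-1}$. The verifier samples $c_i$ fresh and uniformly, independently of $\mathcal{F}_{i-1}$; $\mathcal{F}_{i-1}$ only contains $c_1,\dots,c_{i-1}$ through revealed transcript data. By Lemma~\ref{lem:oob-general}, applied conditionally on the whole history at $t_i^{\mathrm{send}}$ including quantum side information, any timely response system is independent of $c_i$.

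Write the response as a possibly mixed state $\rho_i$, possibly entangled with $\prover^\star$'s memory. Its reduced state at $\verifier$ does not depend on $c_i$. Then the value check succeeds with probability
\[
\tfrac12\sum_{r}\operatorname{tr}\bigl(\Pi_{b_i,r}\rho_i\bigr)=\tfrac12 .
\]
Late or missing responses fail the timing check outright. Hence $\mathbb{E}[I_i\mid\mathcal{F}_{i-1}]\le\tfrac12$ holds almost surely.

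\textbf{Step 2: lifting to multiple rounds.} With $p=\tfrac12$ and $\tau\in(n/2,n]$, Lemma~\ref{lem:bern-kl} gives
\[
\Pr[S\ge\tau]\le\exp\bigl(-nD(\tfrac{\tau}{n}\Vert\tfrac12)\bigr),
\]
which is the claim. The hypothesis $\tau>n/2$ is exactly the range condition $\tau>np$ of the lemma. For $\tau=n$ the formula is the limiting value $D(1\Vert\tfrac12)=\ln 2$, giving $2^{-n}$; I would note that this is consistent with the lemma's conventions.

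\textbf{Main obstacle.} The delicate point is Step 1. The Lemma~\ref{lem:df-single} argument is written for a pure response state $\ket{\phi_i}$, but an adaptive adversary may keep entanglement across rounds and use side information from earlier rounds. So I must check that conditioning on $\mathcal{F}_{i-1}$, which includes quantum registers, does not break independence from $c_i$. I would handle this in three moves:
\begin{itemize}
\item observe that $c_i$ is a classical random variable chosen by $\verifier$ after $\mathcal{F}_{i-1}$ is fixed;
\item note that the timely response is a CPTP image of systems causally disconnected from $c_i$, by Lemma~\ref{lem:oob-general};
\item conclude that the joint state of (response, adversary memory) is a product with $c_i$, so the $\tfrac12$ computation applies to the reduced state.
\end{itemize}
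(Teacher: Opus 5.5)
Your proposal is correct and is essentially the paper's proof: the paper also takes the conditional cap $\mathbb{E}[I_i\mid\mathcal{F}_{i-1}]\le \tfrac12$ from Lemma~\ref{lem:df-single} and then applies the upper tail of Lemma~\ref{lem:bern-kl} with $p=\tfrac12$. Your extra care spells out what the paper's two-line argument leaves implicit, namely mixed or entangled response states (handled via the reduced state at $\verifier$) and the case $\tau=n$, where $D(1\Vert\tfrac12)=\ln 2$ gives $2^{-n}$.
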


\begin{proof}
By Lemma~\ref{lem:df-single} we have $\mathbb{E}[I_i \mid \mathcal{F}_{i-1}] \le 1/2$ for the natural filtration capturing the transcript and the adversary's state.
Applying Lemma~\ref{lem:bern-kl} with $p=1/2$ therefore yields the stated bound for every threshold $\tau>n/2$.
\end{proof}

\subsection{Mafia-fraud security}\label{subsec:mafia-fraud-security}

Recall the mafia-fraud experiment defined in \Cref{def:mf-experiment}.
The per-round basis strings $a=(a_1,\ldots,a_n)$ and $b=(b_1,\ldots,b_n)$ are derived from the long-term key $x$ and the public nonces $(N_v,N_p)$ via a quantum-secure PRF. Consequently, they remain computationally indistinguishable from uniform strings for any QPT adversary pair $(\adversary_1,\adversary_2)$ lacking knowledge of~$x$.
In the fast phase, only the adversary $\adversary_1$ (located near the verifier) can respond within the time bound~$B$; $\adversary_2$ may assist only through the (untimed) learning phase.

Abidin's original analysis of this protocol estimated the single-round mafia-fraud success probability at $3/4$~\cite{abidin2019quantum}. However, this underestimates the adversary's advantage. Verschoor later demonstrated that by combining a more capable pre-ask phase with a tailored response strategy, a mafia adversary can achieve a success probability of $7/8$ per round~\cite{verschoor2022quantum}. We summarize Verschoor's attack using our notation below. The core observation is that an adversary can correctly respond to the verifier's challenge simply by knowing whether $a_i$ and $b_i$ belong to the same basis.
For each fast round~$i$, define the parity bit:
\[
  k_i := a_i \oplus b_i \in \{0,1\}.
\]
The verifier's challenge state is $\ket{\psi_i} = \ket{c_i}_{a_i}$, and an honest prover's response is $\ket{\phi_i} = \ket{c_i}_{b_i}$. The honest prover's operation can be characterised as:
\begin{itemize}
  \item if $k_i=0$ (same basis): applying the identity gate (reflecting the challenge), and
  \item if $k_i=1$ (different bases): applying the Hadamard gate (mapping the basis $\{\ket{0}_0,\ket{1}_0\}$ to $\{\ket{0}_1,\ket{1}_1\}$ and vice versa).
\end{itemize}
Thus, knowledge of $k_i$ suffices for an adversary near the verifier to emulate the honest prover's behaviour during the fast phase.

Verschoor's attack is an intra-session \emph{pre-ask} strategy. It exploits the time gap between the slow phase and the fast phase within the \emph{challenge session}.

\begin{enumerate}
    \item \emph{Slow phase relay.} $(\adversary_1, \adversary_2)$ relay the slow phase messages between $\verifier$ and $\prover$. Consequently, all parties agree on the fresh nonces $N_v, N_p$ and derive the same session-specific secrets $a, b$.
    \item \emph{Pre-ask (Extraction).} Before the verifier starts the fast phase, $\adversary_2$ (near $\prover$) initiates a fast phase execution with $\prover$. For each round $i$, $\adversary_2$ sends the intermediate state $\ket{\xi} = \cos(\frac{3\pi}{8})\ket{0} + \sin(\frac{3\pi}{8})\ket{1}$. The honest $\prover$ measures in basis $a_i$ and returns the result. $\adversary_2$ measures this response to obtain a guess $k'_i$ for the parity $k_i = a_i \oplus b_i$, which is correct with probability $3/4$~\cite{verschoor2022quantum}.
    \item \emph{Fast phase challenge.} When $\verifier$ sends the actual challenge $\ket{c_i}_{a_i}$ to $\adversary_1$, $\adversary_1$ uses the pre-computed guess $k'_i$. If $k'_i=1$, they apply a Hadamard gate (basis swap) to the challenge; otherwise, they reflect it.
\end{enumerate}

\begin{lemma}[Mafia-fraud single-round attack~\cite{verschoor2022quantum}]
\label{lem:mf-single}
There exists a mafia-fraud adversary pair $(\adversary_1,\adversary_2)$ in the experiment of \Cref{def:mf-experiment} such that, in each fast round~$i$, the acceptance indicator~$I_i$ satisfies
\[
  \Pr[I_i = 1]
  \;=\;
  \tfrac{3}{4}\cdot 1 \;+\; \tfrac{1}{4}\cdot\tfrac{1}{2}
  \;=\;
  \tfrac{7}{8}.
\]
\end{lemma}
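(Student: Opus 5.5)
The plan is to compute $\Pr[I_i=1]$ for the three-step pre-ask strategy described above by conditioning on whether $\adversary_2$'s parity guess is correct, i.e.\ on the event $E_i:=\{k'_i=k_i\}$. By the law of total probability,
\[
\Pr[I_i=1]=\Pr[E_i]\,\Pr[I_i=1\mid E_i]+\Pr[\neg E_i]\,\Pr[I_i=1\mid \neg E_i],
\]
so the claim reduces to three facts: $\Pr[E_i]=3/4$, $\Pr[I_i=1\mid E_i]=1$, and $\Pr[I_i=1\mid\neg E_i]=1/2$.

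\emph{Timing.} I will first dispose of the timing check. $\adversary_1$ is co-located with $\verifier$. In the fast phase it only applies a fixed single-qubit gate ($\mathbb{I}$ or the Hadamard $H$), chosen from a classical bit $k'_i$ that was delivered before the fast phase began. Its round-trip time is therefore within $\Delta t_{\max}$, with device latency absorbed by calibration as in the timing model. Hence $I_i$ reduces to the value check $c''_i=c_i$. Since $k'_i$ is obtained before $c_i$ is sampled, it is also independent of $c_i$, and the conditioning above is legitimate.

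\emph{Conditional acceptance.} On $E_i$, $\adversary_1$ applies exactly the honest map described before the lemma: the identity if $a_i=b_i$, and $H$ otherwise, since $H\ket{c}_{0}=\ket{c}_{1}$ and $H\ket{c}_1=\ket{c}_0$. The returned state is then $\ket{c_i}_{b_i}$, and the argument of Lemma~\ref{lem:honest-single} gives acceptance with probability $1$. On $\neg E_i$, the returned state is $\ket{c_i}_{1-b_i}$, a state of the basis conjugate to $b_i$. Measuring it in basis $b_i$ gives each outcome with probability $|\braket{r|c}|^2=1/2$ for conjugate BB84 bases. So the value check passes with probability exactly $1/2$, as in the value computation of Lemma~\ref{lem:df-single}.

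\emph{Main obstacle.} The remaining step, which I expect to be the hardest, is establishing $\Pr[E_i]=3/4$ for the extraction step. The honest $\prover$ measures $\ket{\xi}$ in basis $a_i$ and returns $\ket{c'_i}_{b_i}$, so $\adversary_2$ receives one of four possible mixed states $\rho_{a_i,b_i}$, indexed by $(a_i,b_i)$ uniform. I would compute these explicitly from the Born probabilities. With $\theta=3\pi/8$, the probability that $\prover$ obtains $c'_i=1$ in basis $Z$ is $\sin^2\theta$, and the probability that it obtains $c'_i=0$ in basis $X$ is $\cos^2(\theta-\pi/4)$; both equal $\cos^2(\pi/8)$. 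Next I would form the two averaged hypotheses $\rho_{k}=\tfrac12\sum_{a\oplus b=k}\rho_{a,b}$. The Helstrom optimal success probability for distinguishing them is $\tfrac12+\tfrac14\lVert\rho_0-\rho_1\rVert_1$, and I would verify that this equals $3/4$. I would also read off the optimal measurement that $\adversary_2$ uses. If the closed-form trace-norm computation becomes messy, the fallback is to take Verschoor's stated measurement~\cite{verschoor2022quantum} and check the four cases $(a_i,b_i)\in\{0,1\}^2$ directly.

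Substituting the three facts then gives $\Pr[I_i=1]=\tfrac34\cdot1+\tfrac14\cdot\tfrac12=\tfrac78$.
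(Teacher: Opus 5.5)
Your proposal is correct and follows the paper's proof: you condition on whether $k'_i=k_i$, use the same timing argument for $\adversary_1$, and use the same conditional acceptance probabilities $1$ and $1/2$. The only difference is that the paper imports $\Pr[k'_i=k_i]=3/4$ from Verschoor, whereas you verify it, and the check goes through: the two averaged hypotheses have Bloch vectors $\pm\tfrac{1}{2\sqrt{2}}(1,0,-1)$, so $\lVert\rho_0-\rho_1\rVert_1=1$ and the Helstrom value is $3/4$, attained by projecting onto $\ket{\xi}$, which succeeds with probability exactly $3/4$ for each of the four pairs $(a_i,b_i)$, so the pseudorandomness of $(a,b)$ does not affect the exact value $7/8$.
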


\begin{proof}
\noindent\emph{Timing.}\par
\noindent By definition of the mafia-fraud experiment (\Cref{def:mf-experiment}), $\adversary_1$ is located near $\verifier$ during the fast phase and can therefore return a response within the deadline.

\smallskip
\noindent\emph{Value.}\par
\noindent
Based on the pre-ask analysis above, $\Pr[k'_i = k_i]=3/4$ and $\Pr[k'_i \neq k_i]=1/4$.
If $k'_i=k_i$, $\adversary_1$'s transformation produces exactly $\ket{c_i}_{b_i}$, so the verifier accepts with probability~$1$.
If $k'_i\neq k_i$, the verifier measures in the wrong basis, and the outcome equals $c_i$ with probability~$1/2$.
Conditioning on these two cases yields:
\[
  \Pr[I_i = 1]
    = \Pr[k'_i=k_i]\cdot 1
      + \Pr[k'_i\neq k_i]\cdot\tfrac{1}{2}
    = \tfrac{3}{4} + \tfrac{1}{4}\cdot\tfrac{1}{2}
	    = \tfrac{7}{8}.
\]
\end{proof}

We establish the multi-round security bound in the following.

\begin{theorem}[Mafia-fraud multi-round bound for Verschoor's attack]\label{lem:mf-multi}
  Let $(\adversary_1,\adversary_2)$ be the mafia-fraud adversary pair from Lemma~\ref{lem:mf-single}, and let $S=\sum_{i=1}^n I_i$ be the number of accepted rounds when they execute that strategy independently in each of the $n$ fast rounds. Then, for any threshold $\tau$ with $\tau > \tfrac{7}{8} n$,
  \[
    \Pr[S \ge \tau]
    \;\le\;
    \exp\Bigl(-\,n \cdot D\bigl(\tfrac{\tau}{n}\,\big\Vert\,\tfrac78\bigr)\Bigr).
  \]
\end{theorem}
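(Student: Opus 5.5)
The plan is to reduce the statement directly to the upper tail of Lemma~\ref{lem:bern-kl} with $p=7/8$. We use the natural filtration $(\mathcal{F}_i)_{i=0}^n$, where $\mathcal{F}_{i-1}$ contains three things: the public transcript of rounds $1,\dots,i-1$ (including the acceptance flags), the adversaries' retained classical records (including the guesses $k'_1,\dots,k'_n$ obtained in the pre-ask step), and their joint quantum state. The only thing to establish is the per-round cap
\[
  \mathbb{E}[\,I_i \mid \mathcal{F}_{i-1}\,] \le \tfrac78
  \qquad\text{for every } i\in[n].
\]
With this cap in hand, the lemma gives the bound for every $\tau\in(\tfrac78 n, n]$, which is exactly the stated range.

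To obtain the cap, we re-run the argument of Lemma~\ref{lem:mf-single} conditionally rather than unconditionally. In round $i$ the verifier samples a fresh uniform $c_i$ and prepares $\ket{c_i}_{a_i}$. $\adversary_1$ then applies either the identity or a Hadamard gate, depending only on $k'_i$, and the verifier measures in basis $b_i$. Conditioned on $k'_i=k_i$, acceptance has probability $1$. Conditioned on $k'_i\neq k_i$, the response lies in the basis other than $b_i$, so by the same computation as in the proof of Lemma~\ref{lem:df-single} (using $\Pi_{b_i,0}+\Pi_{b_i,1}=\mathbb{I}$) the outcome equals $c_i$ with probability $1/2$. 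This holds regardless of the earlier rounds, because $c_i$ is fresh and the verifier's challenge qubit for round $i$ is independent of earlier challenges. The conditional acceptance probability is therefore $\Pr[k'_i=k_i\mid\mathcal{F}_{i-1}]+\tfrac12\Pr[k'_i\neq k_i\mid\mathcal{F}_{i-1}]$.

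\textbf{Main obstacle.} The delicate step is showing that $\Pr[k'_i=k_i\mid\mathcal{F}_{i-1}]=3/4$, rather than just unconditionally. The pre-ask for round $i$ is a separate single-qubit exchange with the honest prover using the fixed probe $\ket{\xi}$. Its outcome depends only on $(a_i,b_i)$ and on fresh measurement randomness. Since the statement assumes the strategy is executed independently per round, the pre-ask outcomes $k'_1,\dots,k'_n$ are independent across rounds given $(a,b)$. I will treat $(a_i,b_i)$ as uniform and independent across $i$, invoking the quantum-secure PRF to replace $f_x(N_v,N_p)$ by a truly random string. Any deviation this introduces is negligible in $\lambda$, and it vanishes entirely if one reads the theorem, as stated, for the idealised random $a\parallel b$. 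Under this replacement, the past rounds (which involve only $a_j,b_j,c_j$ and measurement outcomes for $j<i$) carry no information about whether $k'_i$ is correct. Hence the conditional probability equals $3/4$, giving exactly $7/8$ and hence $\le 7/8$.

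Finally, since the cap holds with equality and the rounds are independent, $S\sim\mathrm{Bin}(n,7/8)$. The claim then also follows from the standard Chernoff bound, matching the i.i.d.\ remark after Lemma~\ref{lem:bern-kl}.
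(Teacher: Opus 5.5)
Your proposal is correct and follows the paper's route. The paper's proof simply reads Lemma~\ref{lem:mf-single} as giving $\mathbb{E}[I_i\mid\mathcal{F}_{i-1}]=7/8$ and applies the upper tail of Lemma~\ref{lem:bern-kl} with $p=7/8$; your explicit handling of the conditioning is a more careful version of that step.

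Two small remarks. First, the PRF idealisation is not needed, so the bound holds exactly with no $\negl{\lambda}$ slack. With the natural measurement in the basis $\{\ket{\xi},\ket{\xi^{\perp}}\}$, the pre-ask guess is correct with probability $\cos^4(\pi/8)+\sin^4(\pi/8)=3/4$ for every fixed $(a_i,b_i)$, so given $(a,b)$ the $I_i$ are already i.i.d.\ Bernoulli$(7/8)$. Second, the value $1/2$ in the mismatch case comes from mutual unbiasedness of the two bases, not from $\Pi_{b_i,0}+\Pi_{b_i,1}=\mathbb{I}$, because there the response $\ket{c_i}_{1-b_i}$ does depend on $c_i$.
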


\begin{proof}
For this fixed adversary pair $(\adversary_1,\adversary_2)$, Lemma~\ref{lem:mf-single} gives $\mathbb{E}[I_i \mid \mathcal{F}_{i-1}] = 7/8$ for every round $i$, where $\mathcal{F}_{i-1}$ is the natural filtration capturing the transcript and the adversary's internal state. Applying Lemma~\ref{lem:bern-kl} with $p=7/8$ therefore yields the stated bound on $\Pr[S \ge \tau]$ for all $\tau > \tfrac{7}{8} n$.
\end{proof}

\subsection{Terrorist-fraud security}\label{sec:terroristfraud}

Recall the terrorist-fraud experiment \Cref{def:tf-experiment}, where a distant dishonest prover $\prover^\star$ collaborates with a nearby helper $\mathcal{A}$ who obeys the distance bound $B$.

The QDB protocol~\cite{abidin2019quantum} is inherently insecure against TF because the fast-phase behaviour of an honest prover is fully determined by the session-specific basis strings $(a,b)$ that are computed in the untimed slow phase. Concretely, once a party knows $(a_i,b_i)$ for a given round~$i$, it can perfectly replicate the honest prover's fast-phase behaviour by measuring the incoming challenge state in basis $a_i$ to recover the encoded bit, and immediately re-encoding that same bit in basis $b_i$ and sending it back. No additional secret information (beyond the bases) is needed during the fast phase.

In the terrorist-fraud experiment \Cref{def:tf-experiment}, the prover itself is dishonest and holds the long-term key $x$. After the public nonces $(N_v,N_p)$ are fixed in the slow phase, $\prover^\star$ can compute the session string $(a,b)=f_x(N_v,N_p)$ and transmit it to the nearby helper~$\mathcal{A}$ before the fast phase begins. During the fast phase, $\mathcal{A}$ is co-located with~$\verifier$ and therefore satisfies the timing bound. Moreover, for each fast round~$i$, upon receiving the challenge $\ket{c_i}_{a_i}$, the helper $\mathcal{A}$ measures in basis $a_i$ and obtains $c'_i$, which equals $c_i$ in the noiseless model, and then prepares and returns $\ket{c'_i}_{b_i}$. This is exactly the honest response behaviour described in~\Cref{subsec:qdb-protocol}, so $\verifier$ accepts each round with probability~$1$ by Lemma~\ref{lem:honest-single} in the noiseless setting.

Crucially, the information shared by $\prover^\star$ with $\mathcal{A}$ can remain \emph{non-transferable} in the sense of~\Cref{def:non-transferable-assistance}. The strings $(a,b)$ are tied to the current session via the fresh nonces and are not reusable across further sessions. In a fresh execution with new nonces $(N_v',N_p')$, the required bases are $(a',b')=f_x(N_v',N_p')$, which cannot be computed from $(a,b)$ without knowing $x$. Under the quantum-secure PRF assumption, revealing $(a,b)$ for one nonce pair does not expose the long-term key $x$ nor enable computing the bases for fresh nonces, so $\mathcal{A}$ alone cannot (except with negligible probability) make $\verifier$ accept in a subsequent execution with freshly sampled nonces. In other words, the protocol does not enforce that any help sufficient to pass the \emph{fast} rounds must also be transferable. This is precisely the structural feature that makes perfect TF collusion possible under~\Cref{def:tf-experiment}.

Hence there exists a pair $(\prover^\star,\mathcal{A})$ with non-transferable assistance whose terrorist-fraud advantage is non-negligible, so the QDB protocol is \emph{not} terrorist-fraud secure in the sense of \Cref{def:tf}.

\subsection{Soundness}\label{sec:soundness}

Recall the definition of soundness from \Cref{def:T-soundness}. Based on the single-round security analyses of the QDB protocol, we establish the following bounds:
\begin{itemize}
  \item An upper bound of $1/2$ on the per-round distance-fraud success probability (Lemma~\ref{lem:df-single});
  \item A mafia-fraud strategy that succeeds with per-round probability $7/8$ (Lemma~\ref{lem:mf-single}); this is currently the best known attack, but we do not know if it is optimal;
  \item The protocol is not terrorist-fraud secure under our model (see~\Cref{sec:terroristfraud}).
\end{itemize}
Consequently, we restrict our focus to \(\{DF,MF\}\)-soundness. While the strategy detailed in Lemma~\ref{lem:mf-single} represents the most effective known mafia-fraud attack against this protocol, neither we nor Verschoor~\cite{verschoor2022quantum} claim that this attack is optimal. To ensure a negligible false-acceptance rate against both distance-fraud and mafia-fraud simultaneously, the acceptance threshold must be set strictly above the maximum per-round cheating probability. We define this threshold using a tunable slack parameter $\varepsilon'$.

\begin{theorem}[Soundness given per-round cheating bounds]
\label{thm:soundness-DFMF}
Let $p_{\mathrm{DF}}^{\max}\le 1/2$ denote the upper bound on the per-round distance-fraud success probability (where $p_{\mathrm{DF}}^{\max}=1/2$ by Lemma~\ref{lem:df-single}), and let $p_{\mathrm{MF}}^{\max}<1$ be any upper bound on the per-round mafia-fraud success probability for the adversary class under consideration. Fix any $\varepsilon'>0$ such that $\max(p_{\mathrm{DF}}^{\max},p_{\mathrm{MF}}^{\max})+\varepsilon'<1$ and set the (integer) acceptance threshold
\[
\tau := \left\lceil n\bigl(\max(p_{\mathrm{DF}}^{\max},p_{\mathrm{MF}}^{\max})+\varepsilon'\bigr)\right\rceil.
\]
Then, letting $S=\sum_{i=1}^{n} I_i$ denote the number of accepted rounds:
\[
\varepsilon_{\mathrm{DF}}
= \Pr\bigl[S\ge\tau \mid \text{DF}\bigr]
\;\le\;
\exp\Bigl(-\,n \cdot D\bigl(\tfrac{\tau}{n}\,\big\Vert\,p_{\mathrm{DF}}^{\max}\bigr)\Bigr),
\]
\[
\varepsilon_{\mathrm{MF}}
= \Pr\bigl[S\ge\tau \mid \text{MF}\bigr]
\;\le\;
\exp\Bigl(-\,n \cdot D\bigl(\tfrac{\tau}{n}\,\big\Vert\,p_{\mathrm{MF}}^{\max}\bigr)\Bigr).
\]
In particular, if $n=\poly(\lambda)$ and $p_{\mathrm{MF}}^{\max}<1$ is constant, then both $\varepsilon_{\mathrm{DF}}$ and $\varepsilon_{\mathrm{MF}}$ are negligible in $\lambda$. Since $p_{\mathrm{MF}}^{\max}\ge p_{\mathrm{DF}}^{\max}$ in our setting, it suffices in practice to determine $n$ and $\tau$ based on $p_{\mathrm{MF}}^{\max}$.
\end{theorem}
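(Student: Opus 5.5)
The plan is to derive both bounds from Lemma~\ref{lem:bern-kl} (upper tail), applied separately to the DF and MF experiments with the common threshold $\tau$. The only real work is to verify the lemma's two hypotheses: the threshold lies in the admissible range, and the conditional per-round acceptance is capped by the relevant $p$.

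\emph{Step 1 (threshold range).} Write $p^\ast:=\max(p_{\mathrm{DF}}^{\max},p_{\mathrm{MF}}^{\max})$.
\begin{itemize}
\item Since $\tau\ge n(p^\ast+\varepsilon')>np^\ast\ge np_{\mathrm{DF}}^{\max}$, and likewise $\tau>np_{\mathrm{MF}}^{\max}$, the condition $\tau\in(np,n]$ holds for both choices of $p$.
\item Since $n(p^\ast+\varepsilon')<n$ and $n$ is an integer, the ceiling satisfies $\tau\le n$.
\end{itemize}

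\emph{Step 2 (conditional caps).}
\begin{itemize}
\item For DF, Lemma~\ref{lem:df-single} holds ``for every round and any prior transcript''. Its argument via Lemma~\ref{lem:oob-general} conditions on the whole history and on the prover's internal quantum state at $t_i^{\text{send}}$. Hence $\mathbb{E}[I_i\mid\mathcal{F}_{i-1}]\le 1/2=p_{\mathrm{DF}}^{\max}$, exactly as in Theorem~\ref{lem:df-multi}.
\item For MF, $p_{\mathrm{MF}}^{\max}$ is by hypothesis an upper bound on the per-round success probability for the adversary class considered. I will read that hypothesis in its conditional, adaptivity-robust form: $\mathbb{E}[I_i\mid\mathcal{F}_{i-1}]\le p_{\mathrm{MF}}^{\max}$ for every round and every history. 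This must be stated explicitly, since an unconditional marginal cap would not suffice for Lemma~\ref{lem:bern-kl}.
\end{itemize}
Invoking the upper tail of Lemma~\ref{lem:bern-kl} with $p=p_{\mathrm{DF}}^{\max}$ and with $p=p_{\mathrm{MF}}^{\max}$ then gives the two displayed inequalities.

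\emph{Step 3 (negligibility).} The key quantitative fact is that $D(u\Vert v)$ is increasing in $u$ on $[v,1]$. Since $\tau/n\ge p^\ast+\varepsilon'$, we get $D(\tau/n\Vert p)\ge D(p+\varepsilon'\Vert p)\ge 2\varepsilon'^2$ by Pinsker's inequality, a constant independent of $n$. Each bound is therefore at most $e^{-2\varepsilon'^2 n}$. Taking $n$ as a polynomial growing at least linearly in $\lambda$ (e.g.\ $n=\Omega(\lambda)$) makes this negligible. I would flag that ``$n=\poly(\lambda)$'' should be read this way, because a constant $n$ would not yield negligibility.

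The final remark about sizing by $p_{\mathrm{MF}}^{\max}$ follows from monotonicity of $D(u\Vert v)$ in $v$ for $v\le u$. If $p_{\mathrm{DF}}^{\max}\le p_{\mathrm{MF}}^{\max}$, the DF exponent dominates the MF exponent, so $\varepsilon_{\mathrm{DF}}\le\varepsilon_{\mathrm{MF}}$.

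The main obstacle is Step 2 for MF. Verschoor's attack (Lemma~\ref{lem:mf-single}) is only a lower bound on MF success, so no proven cap on $p_{\mathrm{MF}}^{\max}$ exists. The theorem is therefore conditional on an assumed adaptive per-round cap, and the proof should make that assumption explicit rather than derive it.
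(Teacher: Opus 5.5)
Your proposal is correct and follows the paper's route. The paper gives no separate proof; the theorem is just Lemma~\ref{lem:bern-kl} (upper tail) instantiated with $p=p_{\mathrm{DF}}^{\max}$ and $p=p_{\mathrm{MF}}^{\max}$, exactly as in Theorems~\ref{lem:df-multi} and~\ref{lem:mf-multi}. Your range check on $\tau$, your explicit conditional reading of the MF cap, and your Pinsker bound $e^{-2(\varepsilon')^2 n}$ all fill in details the paper leaves implicit; the last correctly shows that $n$ must grow, e.g.\ $n=\omega(\log\lambda)$, and that $n=\poly(\lambda)$ alone is not enough.

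One small slip: monotonicity of $D(u\Vert v)$ in $v$ orders the two upper bounds, not $\varepsilon_{\mathrm{DF}}$ and $\varepsilon_{\mathrm{MF}}$ themselves. That ordering of the bounds is nonetheless all the sizing remark needs.
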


Applying \Cref{thm:soundness-DFMF} to the QDB protocol, we use $p_{\mathrm{DF}}^{\max}=1/2$ from Lemma~\ref{lem:df-single}. Regarding mafia-fraud, the attack by Verschoor~\cite{verschoor2022quantum} achieves a per-round success probability of $7/8$ (Lemma~\ref{lem:mf-single}). Instantiating Theorem~\ref{thm:soundness-DFMF} with $p_{\mathrm{MF}}^{\max}=7/8$ yields explicit bounds on the false-accept probability against this specific strategy and any other attack with a success rate of at most $7/8$. Should a more effective attack be discovered, the parameters can be adjusted by substituting the updated success probability into Lemma~\ref{lem:bern-kl}.

To ensure completeness in the presence of noise, the expected honest per-round acceptance probability $p(\eta)$ from \Cref{thm:completeness-noise} must strictly exceed the threshold ratio $\tau/n$. Specifically, an honest execution is accepted with probability $1-\negl{\lambda}$ provided that:
\(
p(\eta)=1-\eta+\eta^2/2 > \tau/n.
\)
Equivalently, for a given threshold fraction $u:=\tau/n\in(\max(p_{\mathrm{DF}}^{\max},p_{\mathrm{MF}}^{\max}),1)$ (where $u\in(7/8,1)$ for our instantiation), the depolarizing parameter must satisfy:
\[
\eta \;<\; 1 - \sqrt{\,2u-1\,}.
\]

Figure~\ref{fig:tradeoff} illustrates the performance trade-offs for the QDB protocol: (a) soundness, showing the number of rounds $n$ required to achieve $P_{\mathrm{FA}}\le 2^{-80}$ across varying threshold ratios $u=\tau/n$; and (b) completeness, showing the maximum tolerable depolarizing noise $\eta_{\max}$ per hop.

\begin{figure}[t]
  \centering
  \begin{subfigure}{0.48\linewidth}
    \centering
    \includegraphics[width=\linewidth]{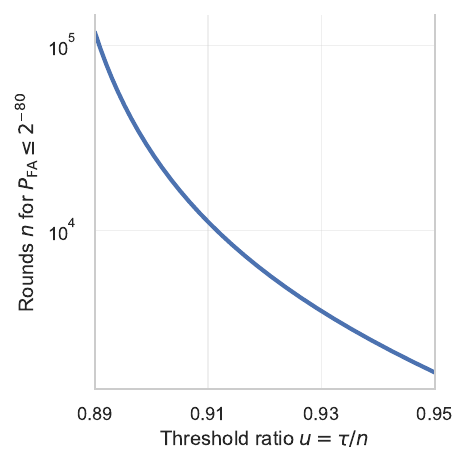}
    \caption{Rounds $n$ required for soundness ($P_{\mathrm{FA}}\le 2^{-80}$).}
    \label{fig:tradeoff-a}
  \end{subfigure}\hfill
  \begin{subfigure}{0.48\linewidth}
    \centering
    \includegraphics[width=\linewidth]{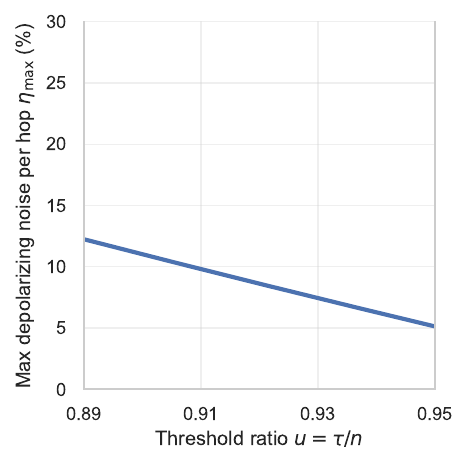}
    \caption{Maximum depolarizing noise $\eta_{\max}$ for completeness.}
    \label{fig:tradeoff-b}
  \end{subfigure}
  \caption{Performance trade-offs as a function of the threshold ratio $u=\tau/n$: (a) soundness; (b) completeness.}
  \label{fig:tradeoff}
\end{figure}

\subsection{Towards terrorist-fraud resistance}

The TF attack described in~\Cref{sec:terroristfraud} succeeds because a nearby helper can reproduce the complete
fast-phase behaviour given only session-specific information that the
dishonest prover can disclose before the fast phase begins. These directions should be understood as structural redesign options rather than simple local patches to the protocol of~\cite{abidin2019quantum}. Preventing such
collusion therefore requires that any assistance sufficient to answer the fast
rounds either (i) cannot be provided without violating the distance bound, or
(ii) necessarily enables the helper to impersonate the prover in future sessions, and is thus excluded by the non-transferability requirement of~\Cref{def:non-transferable-assistance}.

A common design direction in classical DB protocols is to bind the fast-phase responses to long-term secret material, so that producing correct responses requires access to a reusable secret (and in such protocol designs this can enable extraction). Some classical DB protocols implement this principle by coupling the responses directly to the prover's long-term key, so that enabling an accomplice to answer reliably entails disclosing enough information to recover it~\cite{bussard2005distance}.

A complementary direction is to increase the amount of fresh challenge entropy that must be handled during the fast phase, for instance by using multi-bit or multi-state challenges, and to make the expected response depend on this larger challenge space. Intuitively, if correct fast-phase behaviour corresponds to a keyed response function over a larger challenge space, then enabling a helper
without revealing the long-term secret may require providing a large amount of
challenge-dependent information (in the extreme, an explicit response table)
whose size grows with the challenge space, at the cost of additional fast-phase
bandwidth and processing. Classical DB protocols explore this trade-off by extending binary challenges to larger challenge spaces~\cite{avoine2009rfid}.

A third, system-level mitigation is to execute the key-dependent parts of the protocol inside tamper-resistant hardware on the prover side, such that the long-term key $x$ and any session-derived secrets, including the basis strings $(a,b)$, are not readable or exfiltratable by the holder of the device. In this setting, the TF attack is blocked at its source because even a malicious prover cannot leak $(a,b)$ to a nearby helper ahead of the fast phase, since it cannot access them from the secure hardware. This hardware assumption has been noted in the DB literature as a practical way to obtain robustness against TF~\cite{singelee2007distance}.

Developing concrete protocol modifications is left for future work. Our goal is to identify the underlying cause of TF insecurity in~\cite{abidin2019quantum} under the model of~\Cref{def:tf-experiment}. The fast phase can be entirely reproduced from session-derived secrets computed in the slow phase, allowing a dishonest prover to enable a nearby helper without revealing the long-term key.

\section{Conclusion}\label{sec:conclusion}

In this work we proposed a reusable, game-based security framework for QDB protocols and applied it to the QDB protocol~\cite{abidin2019quantum}. The framework adapts the classical methodology~\cite{boureanu2015practical} to the quantum setting by making explicit the quantum-capable adversary model and the formal distance-, mafia-, and terrorist-fraud experiments, fixing a simple noise model, and isolating the per-round success probabilities that drive the overall security guarantees. Within this setting we proved completeness of the protocol in the presence of depolarizing noise by characterising the honest per-round acceptance probability and lifting it to the multi-round setting; for active adversaries we proved an upper bound of~$1/2$ on the per-round distance-fraud success probability, and showed that the best known mafia-fraud strategy succeeds with probability~$7/8$ per round. Plugging these per-round bounds into our generic Lemma~\ref{lem:bern-kl} yields explicit multi-round soundness bounds (Theorem~\ref{thm:soundness-DFMF}) that yield negligible false-accept probability for suitable choices of~$n$ and~$\tau$. Our analysis also shows that the protocol is not terrorist-fraud secure under our model (as established in \Cref{sec:terroristfraud}); we therefore treat terrorist-fraud resistance as a future research direction rather than claiming a drop-in fix.

Conceptually, the framework separates protocol-specific details (such as the
encoding and message flow) from reusable security ingredients (adversary
experiments, noise and timing assumptions, and concentration bounds). This
makes it straightforward to compare different QDB proposals on a common
basis.

We conclude by highlighting several promising directions that remain open:
\begin{itemize}
    \item
    \emph{Beyond i.i.d.\ depolarizing noise.}
    We adopt a simple, protocol-agnostic noise model in~\Cref{subsec:noise-model} as a baseline for completeness and parameter sizing. 
    Realistic implementations of QDB protocols will also exhibit photon loss and erasures, detector inefficiency, dark counts and background clicks, basis-dependent misalignment (different error rates in $Z$ vs.\ $X$), and potentially time-varying or correlated noise across rounds.
    A natural extension is to replace the simple depolarizing model with more realistic per-round noise and an explicit detection and erasure model, and derive the corresponding acceptance probabilities and protocol parameters accordingly.
    In particular, any treatment of \emph{inconclusive} rounds, such as those caused by photon losses, must be modelled carefully, since an adversary could try to induce selective losses on hard rounds and answer only when confident.

    \item \emph{Terrorist-fraud resistance.}
    Our analysis shows that the QDB protocol~\cite{abidin2019quantum} is
    vulnerable to terrorist-fraud. A natural next step is to design and
    analyse QDB protocols that achieve full DF/MF/TF soundness within the
    same framework.
    
    \item \emph{Multiple parties.}
    We focused on the single-verifier, single-prover setting. Extending the framework to multiple verifiers and/or multiple provers is an important direction for making QDB applicable to realistic localisation systems.
    
    \item \emph{Experimental validation.}
    Validating the
    resulting bounds in a full experimental implementation on
    state-of-the-art hardware would further bridge the gap between the
    theoretical guarantees and real-world deployments.
\end{itemize}

Addressing these points will strengthen the case for QDB as a practical
building block for secure localisation and access control, and our framework
is intended to serve as a starting point for such follow-up work.

\bibliographystyle{unsrt}
\bibliography{references}

@inproceedings{brands1993distance,
  title={{Distance-Bounding Protocols}},
  author={Brands, Stefan and Chaum, David},
  booktitle={Workshop on the Theory and Application of Cryptographic Techniques},
  pages={344--359},
  year={1993},
  organization={Springer}
}

@inproceedings{hancke2005rfid,
  title={An {RFID} {Distance Bounding} Protocol},
  author={Hancke, Gerhard P and Kuhn, Markus G},
  booktitle={First international conference on security and privacy for emerging areas in communications networks (SECURECOMM'05)},
  pages={67--73},
  year={2005},
  organization={IEEE}
}

@article{hancke2011design,
  title={{Design of a Secure Distance-Bounding Channel for {RFID}}},
  author={Hancke, Gerhard P},
  journal={Journal of Network and Computer Applications},
  volume={34},
  number={3},
  pages={877--887},
  year={2011},
  publisher={Elsevier}
}

@inproceedings{abidin2016towards,
  title={Towards {Quantum} {Distance Bounding} Protocols},
  author={Abidin, Aysajan and Marin, Eduard and Singel{\'e}e, Dave and Preneel, Bart},
  booktitle={International Workshop on Radio Frequency Identification: Security and Privacy Issues},
  pages={151--162},
  year={2016},
  organization={Springer}
}

@inproceedings{abidin2019quantum,
  title={{Quantum Distance Bounding}},
  author={Abidin, Aysajan},
  booktitle={Proceedings of the 12th Conference on Security and Privacy in Wireless and Mobile Networks},
  pages={233--238},
  year={2019}
}

@inproceedings{abidin2024entanglement,
  title={{Entanglement-based Mutual Quantum Distance Bounding}},
  author={Abidin, Aysajan and Eldefrawy, Karim and Singel{\'e}e, Dave},
  booktitle={International Symposium on Cyber Security, Cryptology, and Machine Learning},
  pages={219--235},
  year={2024},
  organization={Springer}
}

@inproceedings{bogner2024entangled,
  title={{Entangled States and Bell's Inequality: A New Approach to Quantum Distance Bounding}},
  author={Bogner, Kevin and Singel{\'e}e, Dave and Abidin, Aysajan},
  booktitle={2024 IEEE Symposium on Computers and Communications (ISCC)},
  pages={1--6},
  year={2024},
  organization={IEEE}
}

@inproceedings{bogner2025continuous,
  title={{Continuous Variable Quantum Distance Bounding}},
  author={Bogner, Kevin and Abidin, Aysajan and Singel{\'e}e, Dave},
  booktitle={IEEE INFOCOM 2025-IEEE Conference on Computer Communications Workshops (INFOCOM WKSHPS)},
  pages={1--6},
  year={2025},
  organization={IEEE}
}

@article{boureanu2015practical,
  title={{Practical and Provably Secure Distance-Bounding}},
  author={Boureanu, Ioana and Mitrokotsa, Aikaterini and Vaudenay, Serge},
  journal={Journal of Computer Security},
  volume={23},
  number={2},
  pages={229--257},
  year={2015},
  publisher={SAGE Publications Sage UK: London, England}
}

@article{wootters1982single,
  title={{A Single Quantum Cannot be Cloned}},
  author={Wootters, William K and Zurek, Wojciech H},
  journal={Nature},
  volume={299},
  number={5886},
  pages={802--803},
  year={1982},
  publisher={Nature Publishing Group UK London}
}

@inproceedings{bennett1984bb84,
  title        = {{Quantum Cryptography: Public Key Distribution and Coin Tossing}},
  author       = {Bennett, Charles H. and Brassard, Gilles},
  booktitle    = {Proceedings of IEEE International Conference on Computers, Systems and Signal Processing},
  address      = {Bangalore, India},
  pages        = {175--179},
  year         = {1984},
  organization = {IEEE},
  note         = {Reprinted in \emph{Theoretical Computer Science} 560 (2014), 7--11}
}

@article{zhandry2021construct,
  title={{How to Construct Quantum Random Functions}},
  author={Zhandry, Mark},
  journal={Journal of the ACM (JACM)},
  volume={68},
  number={5},
  pages={1--43},
  year={2021},
  publisher={ACM New York, NY}
}

@phdthesis{verschoor2022quantum,
  title={{Quantum Information in Security Protocols}},
  author={Verschoor, Sebastian Reynaldo},
  year={2022},
  school={University of Waterloo}
}

@book{nielsen2010quantum,
  title     = {{Quantum Computation and Quantum Information: 10th Anniversary Edition}},
  author    = {Nielsen, Michael A. and Chuang, Isaac L.},
  publisher = {Cambridge University Press},
  address   = {Cambridge, UK},
  year      = {2010}
}

@book{mitzenmacher2017probability,
  title={{Probability and Computing: Randomization and Probabilistic Techniques in Algorithms and Data Analysis}},
  author={Mitzenmacher, Michael and Upfal, Eli},
  year={2017},
  publisher={Cambridge University Press},
  address={Cambridge, UK}
}

@article{buhrman2014position,
  title={{Position-based Quantum Cryptography: Impossibility and Constructions}},
  author={Buhrman, Harry and Chandran, Nishanth and Fehr, Serge and Gelles, Ran and Goyal, Vipul and Ostrovsky, Rafail and Schaffner, Christian},
  journal={SIAM Journal on Computing},
  volume={43},
  number={1},
  pages={150--178},
  year={2014},
  publisher={SIAM}
}

@inproceedings{bussard2005distance,
  title={{Distance-bounding Proof of Knowledge to Avoid Real-Time Attacks}},
  author={Bussard, Laurent and Bagga, Walid},
  booktitle={IFIP international information security conference},
  pages={223--238},
  year={2005},
  organization={Springer}
}

@inproceedings{avoine2009rfid,
  title={{RFID Distance Bounding Multistate Enhancement}},
  author={Avoine, Gildas and Floerkemeier, Christian and Martin, Benjamin},
  booktitle={International conference on cryptology in India},
  pages={290--307},
  year={2009},
  organization={Springer}
}

@inproceedings{singelee2007distance,
  title={{Distance Bounding in Noisy Environments}},
  author={Singel{\'e}e, Dave and Preneel, Bart},
  booktitle={European workshop on security in ad-hoc and sensor networks},
  pages={101--115},
  year={2007},
  organization={Springer}
}

@article{abidin2020detecting,
  title={{On Detecting Relay Attacks on RFID Systems using Qubits}},
  author={Abidin, Aysajan},
  journal={Cryptography},
  volume={4},
  number={2},
  pages={14},
  year={2020},
  publisher={MDPI}
}

\end{document}